\newcommand\ind[1]{\mathbbm{1}_{\{#1\}}}
\newcommand\ovl[1]{\overline{#1}}
\newcommand\N{\mathbb{N}}
\newcommand\R{\mathbb{R}}
\newcommand\DD{\mathcal{D}}
\newcommand\E{\mathbb{E}}
\renewcommand\P{\mathbb{P}}
\newcommand\eps{\varepsilon}
\newtheorem{theorem}{Theorem}[section]
\newtheorem{proposition}[theorem]{Proposition}
\newtheorem{corollary}[theorem]{Corollary}
\newtheorem{definition}{Definition}
\newtheorem{remark}{Remark}
\def\cal{\mathcal}
\title[Multi-class intermittent connections]{Self-adaptive congestion control for 
multi-class intermittent connections in a communication network}
\author{Carl Graham}
\address[C. Graham]{UMR 7641  CNRS --- École Polytechnique, Route de Saclay, 91128 Palaiseau, France}
\email{carl@cmapx.polytechnique.fr}
\author{Philippe Robert}
\address[Ph.~Robert]{INRIA Paris --- Rocquencourt, Domaine de Voluceau, 78153 Le Chesnay, France}
\email{Philippe.Robert@inria.fr}
\urladdr{http://www-rocq.inria.fr/\string~robert}
\date{August 20, 2010}
\keywords{Flow control algorithm \and Multi-class system \and Mean-field limit \and Chaoticity \and 
Nonlinear stochastic differential equation \and Stationary distribution}
\begin{document}

\maketitle

\begin{abstract}
A Markovian  model of the  evolution of intermittent  connections of various classes  in a
communication network  is established and investigated.   Any connection evolves  in a way
which depends  only on its  class and the  state of the network,  in particular as  to the
route it uses among a subset of the network nodes. It can be either active (ON) when it is
transmitting data  along its  route, or  idle (OFF).  The  congestion of  a given  node is
defined as a functional of the transmission  rates of all ON connections going through it,
and causes  losses and  delays to  these connections.  In  order to  control this,  the ON
connections  self-adaptively  vary  their  transmission  rate in  TCP-like  fashion.   The
connections interact  through this feedback  loop.  A Markovian  model is provided  by the
states  (OFF, or  ON  with some  transmission rate)  of  the connections.   The number  of
connections in each class being potentially huge, a mean-field limit result is proved with
an appropriate scaling so as to reduce the dimensionality.  In the limit, the evolution of
the states  of the  connections can be  represented by  a non-linear system  of stochastic
differential equations,  of dimension  the number of  classes.  Additionally, it  is shown
that  the corresponding  stationary distribution  can be  expressed by  the solution  of a
fixed-point equation of finite dimension.
\end{abstract}

\section{Introduction}
The  Internet network  can be  described as  a very  large distributed
system,  which  manages  the  dynamic exchange  of  data  transmission
through connections  established between nodes of the  network under a
very dynamic, random, scheme. Nodes  cannot cope at all times with the
huge  amount   of  data  transmitted  through  them   by  the  varying
connections, and congestion events occur, causing losses and delays to
the connections.

AIMD  algorithms (Additive Increase,  Multiplicative Decrease)  in the
TCP  protocol  regulate   Internet  traffic:  a  connection  increases
gradually its  throughput as long  as congestion does not  occur along
its route (Additive Increase), but  decreases it brutally when such an
event  takes  place,  usually   by  cutting  it in  half
(Multiplicative Decrease).

An important  practical issue  is to obtain  a better  qualitative and
quantitative  understanding  of  such  algorithms,  and  devising  and
analyzing pertinent  mathematical models  may be very  helpful.  There
are  many serious challenges,  see Graham  and Robert~\cite{Graham:09}
for a discussion of the literature in this domain. Some objectives are
to:
\begin{enumerate}
\item Propose a stochastic model for the arrivals and durations of connections.
\item Propose a stochastic model of flow control by AIMD algorithms; see, \emph{e.g.}, Dumas \emph{et al.}~\cite{Dumas:06} and Guillemin \emph{et al.}~\cite{Guillemin:05}. 
\item  Describe the state of the network, including the large number of connections interacting at its nodes through the congestion they create, in a mathematically tractable way. 
\end{enumerate}

Let us be more specific.  A client at a node initiates a connection to
a server at another node, in  order to transmit data for some specific
purpose.   Each  data packet  is  routed  from  source to  destination
through intermediate nodes, using local routing decisions depending on
the instantaneous state of the  network.  The connection adapts to the
congestion encountered  by its packets by adjusting  its throughput as
above, and ends once all the data has been completely transferred.

The same client may well repeat these steps at various instances with the same server, purpose, and reaction characteristics to network conditions, and thus intermittently establish connections of the same kind.
For simplicity of expression, we call ``user'' such a client-server-purpose-reaction combination.

Graham and Robert~\cite{Graham:09} introduced and analyzed a Markovian network model, 
further studied in Graham \emph{et al.}~\cite{GrahamRV09},
constituted
of  $J$ nodes and hosting $K$ classes of permanent connections.
This paper extends this to $K$ classes of users establishing intermittent connections
in an ON-OFF  pattern. 
An OFF user is idle, and switches ON after a random duration.  
While ON, the user transmits under an AIMD scheme (after establishing a connection)
and switches OFF (cutting the connection) after a random duration depending on 
throughput (related to node load).  
In particular, the number of on-going connections varies with time.

This gives a possible answer to Point~1 above; ON-OFF users are quite natural in models for data transmission in the Internet, for instance a  web transaction (http  connection) can be thought of as a succession of file transfers between server and client. 
For Point~2, the load
of a node may be defined as a linear functional of the throughputs of all connections going
through it, and the loss rate as a function of these quantities, as in~\cite{Graham:09}.  

Point~3 is achieved by proving a mean-field limit result, propagation of chaos,
which we describe roughly. 
When the numbers of users in each class converge to infinity with limit ratios,
for adequate initial conditions, asymptotically users evolve independently,
and the limit behavior  of users of class  $k$ for $1\leq k\leq  K$ is
described by a nonlinear (McKean-Vlasov) stochastic differential  equation (SDE)  in
$\R_+^K$, the effect of the interactions on the connections 
being encoded in the non-linearity of the SDE, as in~\cite{Graham:09}.

The  introduction of intermittent connections  significantly changes  the  problems
investigated in~\cite{Graham:09,GrahamRV09}.  A cemetery state $-1$ is added to encode
the  fact that a user can be OFF, and notably creates technical problems  concerning the
mandatory Lipschitz properties for mean-field convergence.  Moreover, the
limit stationary distributions are still
characterized by finite vectors solving a fixed point  equation, but
their expression is  much more complicated than before; it  is related to the
resolvent of a generic one-dimensional Markov process.

Section~\ref{secMod} introduces more precisely the stochastic model of the network, 
and the mean-field scaling.
Section~\ref{secLimit} gives the main properties of the limiting nonlinear SDE.
Section~\ref{secMF} establishes the  mean-field result.
Section~\ref{secEqui} studies equilibrium properties of the limit.

\noindent
{\bf Acknowledgements}\\
This work has been presented at the Stochastic Networks workshop in Cambridge in March
2010 in the programme on communication networks at the Newton Institute. We are grateful
to the organizers for the invitation. 

\section{Markovian multi-class model for the network}\label{secMod}

\subsection{Classes of ON-OFF users, throughput, node load}

Recall that ``user'' denotes a possible kind of network use by a source-destination pair,
and hence corresponds to the behavior, with respect to congestion, of its packet routes, 
rate of increase of throughput, drop
rate, etc.

The stochastic network 
is constituted of $J\ge1$ nodes and hosts $K\ge1$ classes of users.
There are $N_k\geq 1$ users of class $k$ for $1\leq k\leq K$, which bounds the \emph{varying} number of possible on-going transmissions of this class.
The notations
\[
N=(N_1,\ldots,N_K)\,, \qquad |N| = N_1 + \cdots + N_K\,,
\] 
will be used,
and $|N|$ is the total number of users in the network.

Users alternate between being \emph{active}, or ON, and
\emph{inactive}, or OFF. 
An ON user has data to transmit, and establishes a connection in the network, with throughput 
controlled by a dynamically self-managed ``window size" with values in $\R_+$ until it is done. A fictitious window size of $-1$ will be assigned to OFF users.
The state space for a user is $\R_+\cup\{-1\}$, with the trace of the usual metric.
The cemetery state (with resurrections) 
brings new problems
with respect to Graham and Robert~\cite{Graham:09}, but the choice of $-1$ simplifies notations.

The way ON users utilise the nodes depends only 
on their classes,
and is given in terms of an \emph{allocation  matrix} 
\[
A = (A_{jk}, 1  \le j \le J,\  1 \le k \le  K)\,,
\qquad 
A_{jk}\in\R_+\,,
\]
as follows:
at node $j$, the instantaneous weighted throughput of a class $k$ user
in state $w \in \R_+ \cup\{-1\}$
is given by
\[
A_{jk} w^+ := A_{jk} w\ind{w \ge 0}\,,
\]
which vanishes for OFF users and is proportional to state (window size) for ON users.

Hence the load at node $j$ is given by the total weighted throughput of all users,
and when the $n$-th of class $k$ is in state $w_{n,k}$, by 
\begin{equation}\label{nwt}
u_j = \sum_{k=1}^K\sum_{n=1}^{N_k}  A_{jk} w_{n,k}^+\,,
\qquad
1\le j \le J\,.
\end {equation}
The node load vector
\[
u =  (u_j, {1 \le  j \le J})
\] 
is an important descriptor of the congestion of the network, and is indirectly sensed by the users through their losses.
It is a linear function of the states of the ON users. 

In a simple example, mapping classes to routes,
\[
A_{jk} = 1 \;\;\text{if node $j$ is used by a class $k$ user, and}\;\; A_{jk} = 0
\;\;\text{otherwise.}
\]
More general matrices allow to model, \emph{e.g.},
different utilisations of a given node by 
different classes, or
packets of a given user
taking varied routes in the network.

\subsection{Connection initiation and termination,  congestion control}

Transitions between ON and OFF states are given by
functions $\lambda_k:\R_+^J \rightarrow \R_+$
and $\mu_k : \R_+ \times \R_+^J \rightarrow \R_+$ and laws $\alpha_k$ on $\R_+$, for $1\le k \le K$.
When the node load vector is in state $u =  (u_j, {1 \le  j \le J})$, 
see \eqref{nwt}, then in class $k$:
\begin{itemize}
\item
any OFF user (in state $-1$) is turned ON at instantaneous rate $\lambda_k(u)$ 
with new state $w$ chosen according to $\alpha_k(dw)$, and establishes a connection for data transmission,
\item
any ON user in state $w\ge0$ is turned OFF (to state $-1$)
at rate $\mu_k(w,u)$, and terminates the connection.
\end{itemize}

The throughput of connections are determined by the instantaneous states of the users that 
have initiated them; in a simple example, their initial laws $\alpha_k$ are delta masses, such as $\delta_0$.
They cause congestion at the nodes of the network, see \eqref{nwt}, which is   
controlled by ON users by regulating their state (throughput). This involves
functions $a_k: \R_+ \times \R_+^J \rightarrow \R_+$ and $b_k: \R_+ \times \R_+^J \rightarrow \R_+$
and constants $r_k\in [0,1]$, for $1\le k\le K$;
when the node load vector is $u =  (u_j, {1 \le  j \le J})$ 
and the state of a class $k$ user is $w\ge0$, then instantaneously:
\begin{itemize}
\item  the state increases continuously at speed $a_k(w, u)$,
\item the state jumps from $w$ to $r_k w$ at rate $b_k(w, u)$, corresponding to the occurrence of a
loss in the transmission due to congestion (usually $r_k=1/2$). 
\end{itemize}

The study in Graham and Robert~\cite{Graham:09} considered permanent
connections, corresponding to having $\mu_k(w,u)\equiv0$ and all users ON
at time $0$. Then, users can be identified with connections and have state space $\R_+$, 
and there is no use for $\lambda_k$ and $\alpha_k$.

\subsubsection*{Natural special forms for the parameters}

A simple case is when the throughput rate is proportional to the state (window size) of any user,
so that there are functions 
$\nu_k$ and $\beta_k$ from $\R_+^J$ to $\R_+$ such that 
\begin{equation} 
\label{outputprop}
\mu_k(w,u) = w \nu_k(u)\,,
\qquad
b_k(w,u) = w \beta_k(u)\,.
\end{equation}
A special case has constant functions
$\lambda_k(u) \equiv \lambda_k>0$
and
$\nu_k(u) \equiv \nu_k>0$
(with slight abuse of notation),
so that any class $k$ user goes ON at rate $\lambda_k$, is required to transmit a quantity of data which follows an exponential law with parameter $\nu_k$ at instantaneous throughput rate given by its window, and then goes OFF. 
Also, a natural case has 
\begin{equation}
\label{incrnow}
a_k(w,u) = a_k(u)
\end{equation}
regardless of $w$ (with slight abuse of notation), and
related to the inverse of the round-trip time (RTT) in the network when its load
is given by the  vector $u$. 

A natural sub-case of the above  is when
\begin{equation}
\label{bafopa}
a_k(u)
= \biggl( \tau_k + \sum_{j=1}^J t_{jk} (u_j) \biggr)^{-1}\,,
\qquad
\beta_k(u) = \delta_k +\sum_{j=1}^J d_{jk} (u_j)\,,
\end{equation}
where, for class $k$ users, $\tau_k>0$ is the RTT between source and destination and $\delta_k\ge0$ is the loss rate in a non-congested  network, and $t_{jk}(u_j)\ge0$ is the 
supplementary RTT delay and $d_{jk}(u_j)\ge0$ loss rate at node $j$ when its load is $u_j$. We may expect $t_{jk}(u_j)$ and $d_{jk}(u_j)$ to behave linearly in $u_j$, at least for large $u_j$.

Hence, care will be taken to include cases where $b_k(w,u)$ may have a \emph{quadratic} behavior in the assumptions to be made.

General forms of the parameters $\lambda_k$, $\mu_k$, $\alpha_k$, $a_k$, $b_k$, and $r_k$  
are used to model, \emph{e.g.}, balking (at arrival and later) in a congested network, more complex relations between window size and throughput  and losses, sequences of losses
due to congestion, slow start, RED, etc. 
For instance, a last very simple case is  $\lambda_k(u) \equiv \lambda_k>0$
and $\mu_k(w,u) \equiv \mu_k>0$, corresponding to someone popping in and out
to browse the Internet in his spare time.

\subsection{Markov process and its SDE representation,  mean-field scaling}

A Markov process describing this model is given  by
\[
W^N(t)=(W_{n,k}^N(t), {1\le n \le N_k,\, 1\le k \le K})\,,
\qquad
t\ge0\,,
\] 
where 
\[
W_{n,k}^N(t)\in \R_+\cup\{-1\}
\]
is the state of the $n$-th user of class~$k$ at time $t$.
We choose to represent it as the solution
of the Itô-Skorohod stochastic differential equation (SDE)
\begin{equation}
\label{2sde}
\left\{
\begin{aligned}
& dW_{n,k}^N(t) =
\ind{W_{n,k}^N(t-) = -1}
\int (1+w) \ind{0 < z < \lambda_k({U}^N(t-))} \,\mathcal{A}_{n,k}(dw,dz,dt)
\\ 
&\hphantom{dW_{n,k}^N(t) =}
\;\;
+ \ind{W_{n,k}^N(t-)\ge 0}
\biggl[
a_k(W_{n,k}^N(t-), {U}^N(t-))\,dt
\\ 
&\hphantom{dW_{n,k}^N(t) =}
\qquad
- (1-r_k) W_{n,k}^N(t-) 
\int \ind{0 < z <  b_k(W_{n,k}^N(t-), {U}^N(t-))} \,\mathcal{N}_{n,k}(dz,dt)
\\ 
&\hphantom{dW_{n,k}^N(t) =}
\qquad
- (1 + W_{n,k}^N(t-))
\int \ind{0 < z <  \mu_k (W_{n,k}^N(t-), {U}^N(t-))} \,\mathcal{D}_{n,k}(dz,dt)
\biggr]\,,
\\
&1\leq k\leq K\,,\; 1\leq n\leq N_k\,;
\\
& U^N(t) = (U_j^N(t), {1 \le j \le J})\,,
\qquad
{U}^N_j(t) = \sum_{k=1}^K A_{jk} \sum_{i=1}^{N_k}{ W}^N_{n,k}(t)^+\,,
\end{aligned}
\right.
\end{equation}
driven by Poisson point processes $\mathcal{A}_{n,k}$ 
on $\R_+^3$ with intensity measure $\alpha_k(dw)\,dz\,dt$ and 
$\mathcal{N}_{n,k}$ and $\mathcal{D}_{n,k}$ on $\R^2_+$ with intensity measure $dz\,dt$, 
all independent. See Graham and Robert~\cite{Graham:09} for a discussion of the related
simpler model
for permanent connections.
The following can be proved with standard arguments. 

\begin{proposition}\label{eusde}
If the functions $a_k$ are Lipschitz
and the functions $b_k$, $\lambda_k$, and $\mu_k$ are locally bounded, $1\leq k\leq K$, then there is pathwise existence and uniqueness of solution for the stochastic
differential equation~\eqref{2sde}, and the corresponding Markov process is well defined.
\end{proposition}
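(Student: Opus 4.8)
The plan is to build the solution pathwise by alternating the deterministic flow of the drift with the jumps read off the driving Poisson processes, obtain existence and uniqueness up to an explosion time $\tau_\infty$, and finally prove $\tau_\infty=\infty$ almost surely.

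I first treat the deterministic part. Between jump times the set of OFF users is frozen and \eqref{2sde} reduces to the ODE system $\dot W_{n,k}=a_k(W_{n,k},U^N)\ind{W_{n,k}\ge0}$, coupled only through the load $U^N$, which is a fixed linear functional of the quantities $(W_{n,k})^+$. As each $a_k$ is Lipschitz and $w\mapsto w^+$ is $1$-Lipschitz, the associated vector field on $(\R_+\cup\{-1\})^{|N|}$ is locally Lipschitz with at most linear growth, so Cauchy--Lipschitz yields a unique global flow $\Phi$; since $a_k\ge0$, active coordinates remain in $\R_+$. I then run the usual recursion: starting from $W^N(0)$, follow $\Phi$ until the first atom of one of the $\mathcal{A}_{n,k}$, $\mathcal{N}_{n,k}$, $\mathcal{D}_{n,k}$ falls below the corresponding instantaneous rate along the flow. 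Because $b_k$, $\lambda_k$, $\mu_k$ are locally bounded and $\Phi$ is continuous, that rate is bounded on every compact time interval, hence this first atom occurs a.s.\ strictly later and the recursion proceeds, producing a c\`adl\`ag process $W^N$ on $[0,\tau_\infty)$, where $\tau_\infty\in(0,\infty]$ is the nondecreasing limit of the successive jump times. Pathwise uniqueness is immediate: any solution must coincide with $\Phi$ between jumps and realise the same prescribed jumps at the same instants.

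It remains to show $\tau_\infty=\infty$ a.s. Fix $T>0$ and set $V(t)=\sum_{n,k}(W^N_{n,k}(t))^+$, so that $\|U^N(t)\|\le\|A\|_\infty V(t)$. The continuous part of $V$ grows at rate $\sum_{n,k}\ind{W_{n,k}\ge0}a_k(W_{n,k},U^N)\le C(1+V)$ by the linear growth of the $a_k$; the $\mathcal{N}$-jumps ($w\mapsto r_kw$ with $r_k\le1$) and the $\mathcal{D}$-jumps ($w\mapsto-1$) only decrease $V$; the $\mathcal{A}$-jumps are the only jumps that increase it, each time by an independent $\alpha_k$-distributed amount. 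Consequently, if $V$ stayed bounded on $[0,\tau_\infty)$, all jump rates would be bounded there by local boundedness of $\lambda_k$, $b_k$, $\mu_k$, the jump times would be dominated by those of a homogeneous Poisson process and could not accumulate, so $\tau_\infty=\infty$; it therefore suffices to rule out a blow-up of the load. Introducing the passage times $\rho_B=\inf\{t:V(t)>B\}$ gives $\tau_\infty\ge\rho_B\wedge T$ for every $B$ (bounded rates on $[0,\rho_B)$); together with a Gronwall control of the continuous part and the fact that only finitely many $\mathcal{A}$-jumps can occur before $\rho_B\wedge T$, a blow-up would force a sequence of $\mathcal{A}$-jumps of unbounded size accumulating at $\tau_\infty$. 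Since there are only finitely many users, such a sequence is carried by a single user, which is OFF --- hence contributes nothing to the load --- just before each of these jumps, so the rate $\lambda_k(U^N)$ that must be large to trigger them comes from the other users only; peeling the users off one at a time, using $|N|<\infty$, one reaches a user whose blow-up is unsupported, a contradiction. Hence $\tau_\infty=\infty$ a.s.\ and $W^N$ is a well-defined Markov process. The delicate point --- with no counterpart in the permanent-connection model of~\cite{Graham:09}, where every jump is downward --- is exactly this control of the upward $\mathcal{A}$-jumps fed back into the nonlinear rates through the load; the cemetery state only adds the mild task of checking Lipschitz and local-boundedness properties across the value $-1$, which is harmless since $-1$ is isolated in $\R_+\cup\{-1\}$.
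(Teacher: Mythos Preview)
The paper itself gives no proof here beyond the sentence ``The following can be proved with standard arguments.'' Your construction --- solve the globally Lipschitz ODE between jumps, read off the next jump from the driving Poisson processes, iterate --- is exactly the standard route and is correct for pathwise existence and uniqueness up to the explosion time $\tau_\infty$. So up through the definition of $\tau_\infty$ your write-up actually supplies more than the paper does.

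The non-explosion argument, however, has real gaps. Your Gronwall bound shows that $V(\sigma_n)+1\le e^{C\tau_\infty}\bigl[(V(0)+1)+\sum_{i\le n}J_i\bigr]$ along the $\mathcal{A}$-jump times $\sigma_i$, with $J_i$ the $\alpha_k$-sample. This forces $\sum_i J_i=\infty$ if $V$ blows up; it does \emph{not} force any individual $J_i$ to be large, so ``a sequence of $\mathcal{A}$-jumps of unbounded size'' is unjustified (take $\alpha_k$ with bounded support). What you can conclude by pigeonhole is that some single user has infinitely many $\mathcal{A}$-jumps accumulating at $\tau_\infty$, hence infinitely many alternating $\mathcal{D}$-jumps as well, and that just before each of its $\mathcal{A}$-jumps the load is carried by the remaining users. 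But the ``peeling'' step, as written, is not a proof: removing that user does not reduce you to an autonomous $(|N|-1)$-user system, because the drift $a_k(W_{n,k},U^N)$ of the other users still feels the full load, including the contribution of the user you want to peel off during its ON excursions. You would need either an honest induction on $|N|$ with a quantitative control of that feedback, or a direct argument that the OFF sojourn times of any fixed user cannot be summable (for instance because the load seen by an OFF user is, via Gronwall, bounded by a function of the finitely many earlier $\alpha_k$-samples of the other users). As it stands, Step~3 is a plausible heuristic rather than a proof; under the bare hypotheses of the proposition (no moment condition on $\alpha_k$, only local boundedness of $\lambda_k,\mu_k,b_k$) the non-explosion is genuinely the delicate point you identify, and it deserves a complete argument.
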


\subsection{The mean-field asymptotic regime}

The high-dimensional system of coupled stochastic differential equations~\eqref{2sde} does  not seem to be 
mathematically tractable as such, and an asymptotic study,
reducing its dimension $|N|$ to the number of classes $K$,
is used to investigate the qualitative
properties of solutions.

We consider the mean-field scaling in which
\begin{equation}\label{e:mfar}
N_k\to\infty\,,
\quad
\frac{N_k}{|N|} := \frac{N_k}{N_1 + \cdots + N_K} \to p_k\,,
\qquad
1\leq k\leq K\,,
\end{equation}
(necessarily $(p_k)_{1\le k \le K}$ is a probability vector),
and moreover the capacities of the resources are scaled by a factor $|N|$, 
\emph{i.e.}, $U^N$ is replaced by $\ovl{U}^{N} = \frac1{|N|}U^N$ in \eqref{2sde}
so as to have a non-trivial limit.
This procedure leads to the \emph{rescaled} SDE 
\begin{equation}
\label{sde}
\left\{
\begin{aligned}
& dW_{n,k}^N(t) =
\ind{W_{n,k}^N(t-) = -1}
\int (1+w) \ind{0 < z < \lambda_k(\ovl{U}^N(t-))} \,\mathcal{A}_{n,k}(dw,dz,dt)
\\ 
&\hphantom{dW_{n,k}^N(t) =}
\;\;
+ \ind{W_{n,k}^N(t-)\ge 0}
\biggl[
a_k\bigl(W_{n,k}^N(t-), \ovl{U}^N(t-)\bigr)\,dt
\\ 
&\hphantom{dW_{n,k}^N(t) =}
\qquad
- (1-r_k) W_{n,k}^N(t-) 
\int \ind{0 < z <  b_k(W_{n,k}^N(t-), \ovl{U}^N(t-))} \,\mathcal{N}_{n,k}(dz,dt)
\\ 
&\hphantom{dW_{n,k}^N(t) =}
\qquad
- (1 + W_{n,k}^N(t-))
\int \ind{0 < z <  \mu_k (W_{n,k}^N(t-), \ovl{U}^N(t-))} \,\mathcal{D}_{n,k}(dz,dt)
\biggr]\,,
\\
&1\leq k\leq K\,,\; 1\leq n\leq N_k\,;
\\
& {\ovl{U}}^N(t) 
= \bigl(\ovl{U}_j^N(t), {1 \le j \le J}\bigr) \,,
\qquad
\ovl{U}^{N}_j(t) = \frac{1}{|N|} U^N_j(t)
= \sum_{k=1}^K  A_{jk} \frac{N_k}{|N|} \ovl{W}^{N}_k(t)\,,
\\
&\ovl{W}^N_k(t) = \frac{1}{N_k}\sum_{n=1}^{N_k} W_{n,k}^N(t)^+\,.
\end{aligned}
\right.
\end{equation}

This is a multi-class mean-field system, in interaction through the 
vector 
\[
\ovl{W}^N(t) = \bigl(\ovl{W}^N_k(t), {1\le k \le K}\bigr)
\]
of the empirical means of the class throughputs, \emph{via}
the vector $\ovl{U}^N\!(t)$ of the \emph{scaled loads}.
For  $1\leq k\leq K$,  it is  natural to  introduce the  \emph{empirical measure}  for the
processes associated to class~$k$ users, given by
\[
\Lambda^N_k = \frac{1 }{N_k} \sum_{n=1}^{N_k} \delta_{(W_{n,k}^N(t), t\geq 0)}
\] 
where $\delta_{(x(t),t\geq 0)}$ denotes the Dirac mass at the function $t\in \R_+ \mapsto x(t)\in\R_+\cup\{-1\}$, and
in particular $\ovl{W}^N_k(t) = \langle w^+, \Lambda^N_k(t)(dw)\rangle$,
where $\Lambda^N_k(t)$ is the marginal of $\Lambda^N_k$ at time $t$.

\subsection{Notations and conventions}
For $x=(x_m, {1\le m \le M})\in \R^M$ and $v:\R_+\to \R^M$ for some $M\in \N$ and $T>0$,
let
\[
\|x\| =\max_{1\leq m\leq M}|x_m|\,,
\qquad
\|v\|_T = \sup_{0\leq s\leq T} \|v(s)\|\,.
\]
Depending on the context, the function $v = (v(t), {t\ge 0})$ may be denoted
\[
(v_m(t), {1\leq m \leq M}, {t\geq 0}) 
\;\;\text{or}\;\; 
(v_m(t), {1\leq m \leq M})
\;\;\text{or}\;\;
(v(t))
\,.
\]

If $H$ is a complete metric space, $\DD(\R_+,H)$ denotes the Skorohod space of functions
with values in $H$, right-continuous with left limits at any point of $\R_+$, see
Billingsley~\cite{Billingsley:02}.

The r.v.\ $\Lambda^N_k$ has values in the
set $\mathcal{P}(\DD(\R_+,\R_+\cup\{-1\}))$ of probability measures on $\DD(\R_+,\R_+\cup\{-1\})$.
  
\section{Analysis of the nonlinear limit process}\label{secLimit}

\subsection{Heuristic derivation of the mean-field limit}

For~\eqref{sde}, the symmetry properties within each user class lead us to expect a mean-field convergence phenomenon as $N$ gets large,
for appropriately converging initial conditions:
\begin{itemize}
\item
the processes $W_{n,k}^N$ for $1\le k\le K$ and $1\le n \le N_k$ \emph{should} become independent and converge in law to processes $W_k$ (depending only on the class $k$), 
\item
and the empirical measure $\Lambda^N_k$ {\em should}
converge in law to a deterministic limit given by the law of the same process $W_k$,
\end{itemize}
where the stochastic process 
\[
(W(t), {t\geq 0})=((W_k(t), {t\geq 0}), {1\leq k\leq K})
\]
is the solution of the nonlinear, or McKean-Vlasov, Itô-Skorohod SDE
\begin{equation}
\label{nlsde}
\left\{
\begin{aligned}
& dW_k(t) =
\ind{W_k(t-) = -1}
\int (1+w) \ind{0 < z < \lambda_k\left(u_W(t)\right)}
\,\mathcal{A}_k(dw,dz,dt)
\\ 
& \hphantom{dW_k(t) =}
\;
+ \ind{W_k(t-)\ge 0} \biggl[ a_k\left(W_k(t-), u_W(t)\right)\,dt 
\\ 
& \hphantom{dW_k(t) =}
\qquad 
- (1-r_k) W_k(t-) 
\int \ind{0 < z < b_k\left(W_k(t-), u_W(t)\right)}
\,\mathcal{N}_k(dz,dt)
\\
& \hphantom{dW_k(t) =}
\qquad 
- (1 + W_k(t-))
\int \ind{0 < z < \mu_k\left(W_k(t-), u_W(t)\right)}
\,\mathcal{D}_k(dz,dt)
\biggr]\,,
\\
&1\le k \le K\,;
\\
& u_W(t) = (u_{W,j}(t), 1 \le j \le J)\,,
\qquad
u_{W,j}(t) = \sum_{k=1}^K A_{jk} p_k \E(W_k(t)^+)\,,
\end{aligned}
\right.
\end{equation}
driven by Poisson point processes $\mathcal{A}_k$ on $\R_+^3$ with 
intensity measure $\alpha_k(dw)\,dz\,dt$ 
and $\mathcal{N}_k$ and $\mathcal{D}_k$ on $\R_+^2$ with intensity measure $dz\,dt$, all independent.

The interaction  between coordinates depends on 
the  mean throughput vector 
\[(u_W(t), t\ge0)\]
which is a linear
functional of 
the mean class output vector
\[
\E(W(t)^+) = \E(W_k(t)^+, {1\leq k\leq K}) = \langle w^+, \mathcal{L}(W(t)) \rangle\,.
\]
In  particular, the  infinitesimal  generator of  the process  $(W(t),
{t\ge0})$ depends,  at time $t$, on  the \emph{law} of  $W(t)$ and not
only  on the value  taken by  the sample  path, as  it is  usually the
case. Using  this generator, there  is a nonlinear  martingale problem
formulation for the weak interpretation of Equation~\eqref{nlsde}.

Some properties of the solutions of the SDE~\eqref{nlsde} are analyzed
in this section. The mean-field  convergence results will be the topic
of Section~\ref{secMF}.

\subsection{Existence and uniqueness results}
The  following proposition  is the  central technical  result  used to
establish   the    existence   and   uniqueness    of   solutions   of
Equation~\eqref{nlsde}, as well  as the mean-field convergence result.
For this last purpose, it  is convenient (but not essential) to obtain
a bound $C(t)$ which does not depend on $\Vert u'\Vert_t$.

\begin{proposition}\label{lp}
For $1\le k\le K$, 
let the functions $a_k$ be bounded and $a_k$, $b_k$, $\lambda_k$, and $\mu_k$ be Lipschitz, and the laws $\alpha_k$ have a first moment $m_k = \int w \,\alpha_k(dw)<\infty$, 
and let $\mathcal{A}_k$ be Poisson point processes on $\R_+^3$ with 
intensity measure $\alpha_k(dw)\,dz\,dt$ and 
$\mathcal{N}_k$ and $\mathcal{D}_k$ be Poisson point processes on $\R_+^2$ with intensity measure $dz\,dt$, all forming an independent family.
For any $u=(u(t), {t \ge0})$ in $C(\R_+,\R^J)$ and $(\R_+\cup\{-1\})^K$-valued random 
variable $X_0$ independent of the Poisson point processes, let
\[
\phi(X_0,u)=(X(t), {t \ge0})=((X_k(t), {t\geq 0}), {1\leq k\leq K})
\]
be the solution starting at $X(0)=X_0$
of the stochastic differential equation
\begin{equation} \label{sdeparam} \tag{$\mathcal{E}_u$}
\left\{
\begin{aligned}
& dX_k(t) =
\ind{X_k(t-) = -1}
\int (1+w) \ind{0 < z < \lambda_k\left(u(t)\right)}
\,\mathcal{A}_k(dw,dz,dt)
\\ 
& \hphantom{dX_k(t) =}
\;
+ \ind{X_k(t-)\ge 0} \biggl[ a_k\left(X_k(t-), u(t)\right)\,dt 
\\ 
& \hphantom{dX_k(t) =}
\qquad 
- (1-r_k) X_k(t-) 
\int \ind{0 < z < b_k\left(X_k(t-), u(t)\right)}
\,\mathcal{N}_k(dz,dt)
\\
& \hphantom{dX_k(t) =}
\qquad 
- (1 + X_k(t-))
\int \ind{0 < z < \mu_k\left(X_k(t-), u(t)\right)}
\,\mathcal{D}_k(dz,dt)
\biggr]\,,
\\
&1\le k \le K\,.
\end{aligned}
\right.
\end{equation}

Let $u=(u(t), {t \ge0})$ and $u'=(u'(t), {t \ge0})$ be in $C(\R_+,\R^J)$, and
initial values $X_0$ and $X'_0$ be independent of the Poisson point processes.
Then, for all $t\ge0$,
\begin{equation}\label{esteq}
\E\left[\|\phi(X_0,u)-\phi(X'_0,u')\|_t \,\Big|\, X_0, X'_0 \right]
\leq \left[\|X_0-X'_0\| +\int_0^tC(s)\|u-u'\|_s\,ds\right] e^{tC(t)}
\end{equation}
where $C(t) = A(1 + t + \Vert u\Vert_t + \Vert X_0\Vert+\Vert X'_0\Vert )$ for 
some constant $A<\infty$ depending only
on $a_k$, $b_k$, $\lambda_k$, $\mu_k$, and $m_k$, $1\le k \le K$.
\end{proposition}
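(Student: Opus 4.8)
The plan is to realise both $\phi(X_0,u)$ and $\phi(X_0',u')$ as strong solutions driven by the \emph{same} Poisson point processes $\mathcal A_k$, $\mathcal N_k$, $\mathcal D_k$, to obtain a pathwise integral inequality for the difference, and to close it by Gronwall's lemma. The groundwork consists of two a priori estimates, conditional on the initial data. First, since $a_k$ is bounded, $\lambda_k$ is Lipschitz (so $\lambda_k(u(s))\le\lambda_k(0)+L\|u\|_s$ on $[0,t]$) and $\alpha_k$ has first moment $m_k$, while the $\mathcal N_k$- and $\mathcal D_k$-jumps only decrease the state, a domination by the bounded drift plus a compound Poisson input yields $\E[\|X\|_s\mid X_0]\le C_0(1+s+s\|u\|_s+\|X_0\|)$. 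Second --- and this is what will tame the state‑proportional jumps --- balancing total input against total output of each coordinate and using $X_k(t)\ge-1$, then taking expectations and compensating, gives the \emph{mass‑balance estimates} $\E[\int_0^tX_k(s)b_k(X_k(s),u(s))\,ds\mid X_0]\le C_1(t)$ and $\E[\int_0^t(1+X_k(s))\mu_k(X_k(s),u(s))\,ds\mid X_0]\le C_1(t)$, with $C_1$ of the announced form, valid even though the $\alpha_k$ need not have a second moment.

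The new difficulty, the cemetery state $-1$, is handled by the elementary remark that in the metric of $\R_+\cup\{-1\}$ two coordinates in different phases are at distance at least $1$: thus $\ind{X_k(s)=-1,\,X_k'(s)\ge0}\le|X_k(s)-X_k'(s)|$, so any \emph{bounded} quantity that is active only in a mismatched configuration is dominated by (its bound) times $|X_k-X_k'|$, and one gets the uniform comparison $|g(X_k)\ind{X_k\ge0}-g(X_k')\ind{X_k'\ge0}|\le(\mathrm{Lip}\,g+\|g\|_\infty)(|X_k-X_k'|+\|u-u'\|_s)$ across all phase configurations, for $g\in\{a_k,b_k,\mu_k,\lambda_k\}$. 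This converts the ``background'' drift and jump terms that arise when the two copies are out of phase into contributions proportional to $\|X-X'\|$, which is what Gronwall requires.

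Next, bound $\sup_{r\le t}|X_k(r)-X_k'(r)|$ by $|X_k(0)-X_k'(0)|$ plus the variations coming from the drift and from $\mathcal A_k$, $\mathcal N_k$, $\mathcal D_k$, exploiting the common driving throughout: on the agreement region of the thresholds both copies make the \emph{same} transition, so the difference either collapses ($\mathcal A_k$: same new value; $\mathcal D_k$: both to $-1$) or contracts by the factor $r_k$ ($\mathcal N_k$) --- a nonpositive increment, simply discarded; only on the discrepancy region, of Lebesgue measure $\le L(|X_k-X_k'|+\|u-u'\|_s)$ by the Lipschitz hypotheses, does a single copy jump, with increment of the difference at most the jump size ($1+w$ with $w\sim\alpha_k$ for $\mathcal A_k$; $(1-r_k)X_k$ or $(1-r_k)X_k'$ for $\mathcal N_k$; $1+X_k$ or $1+X_k'$ for $\mathcal D_k$). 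Taking expectations, the $\mathcal A_k$-contribution is $\le(1+m_k)L\int_0^t\|u-u'\|_s\,ds$ plus a phase‑mismatch part handled by the gap remark above together with the first‑moment estimate. The delicate terms are the $\mathcal N_k$- and $\mathcal D_k$-ones, of the form $\int_0^t(1-r_k)\,\E[X_k(s)\,|b_k(X_k(s),u(s))-b_k(X_k'(s),u'(s))|]\,ds$ and analogues; splitting $|b_k(\cdot)-b_k(\cdot)|\le\bigl(L(|X_k-X_k'|+\|u-u'\|_s)\bigr)\wedge\bigl(b_k(X_k,u)+b_k(X_k',u')\bigr)$ isolates, on one hand, a piece $\le\int_0^tC(s)\|u-u'\|_s\,ds$ (here it is essential to keep $\|u-u'\|_s$ under the integral, since it is deterministic and $\E[X_k(s)\mid X_0]\le C_0(s)$), and on the other a piece absorbed into $\int_0^tC(s)\,\E[\|X-X'\|_s\mid X_0,X_0']\,ds$ with the help of the mass‑balance estimates of the first step.

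Collecting everything gives, with $C(s)=A(1+s+\|u\|_s+\|X_0\|+\|X_0'\|)$ nondecreasing,
\begin{align*}
\E\bigl[\,\|X-X'\|_s\,\big|\,X_0,X_0'\,\bigr]
&\le\|X_0-X_0'\|+\int_0^sC(r)\|u-u'\|_r\,dr\\
&\quad+\int_0^sC(r)\,\E\bigl[\,\|X-X'\|_r\,\big|\,X_0,X_0'\,\bigr]\,dr,\qquad s\le t,
\end{align*}
and Gronwall's lemma --- the first integrand being nondecreasing, and $\int_0^tC(r)\,dr\le tC(t)$ --- yields exactly \eqref{esteq}. I expect the main obstacle to be the last part of the third step: the multiplicative $\mathcal N_k$-jumps, and the $\mathcal D_k$-jumps sending a possibly large state to $-1$, have size proportional to the current state, so a crude estimate produces a genuinely quadratic term $\E[X_k(s)\,|X_k(s)-X_k'(s)|]$ which need not even be finite under the first‑moment assumption alone; it is the simultaneous use of the contraction on the agreement region, the Lipschitz control of the discrepancy region (supplying the needed factor $|X_k-X_k'|$ or $\|u-u'\|_s$), and the mass‑balance a priori estimates that makes this term summable and of the required form.
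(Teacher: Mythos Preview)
Your overall strategy---couple both solutions on the \emph{same} Poisson point processes, bound $\|X_k-X_k'\|_t$ by the initial gap plus four integral terms (one per piece of the SDE), exploit that $\mathbbm 1_{\{-1\}}$ and $\mathbbm 1_{\R_+}$ are $1$-Lipschitz on $\R_+\cup\{-1\}$, and close by Gronwall---is exactly the paper's. Two points of execution differ.

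First, your ``uniform comparison'' invokes $\|g\|_\infty$, which is not available for $b_k$ and $\mu_k$ (only $a_k$ is assumed bounded). The paper's device here---which it flags as the \emph{key step}---is to extend each $f\in\{a_k,b_k,\mu_k\}$ from $\R_+\times\R_+^J$ to $(\R_+\cup\{-1\})\times\R_+^J$ by $f(x,u):=f(x^+,u)$. Since $x\mapsto x^+$ is $1$-Lipschitz on $\R_+\cup\{-1\}$, the extension keeps the same Lipschitz constant, and since the integrands in the $\mathcal N_k$- and $\mathcal D_k$-terms already carry the prefactors $x^+$ and $1+x$ (both vanishing at $x=-1$), the extension leaves them unchanged. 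This avoids any phase-by-phase case split and replaces your $\|g\|_\infty$ by what is effectively $g(0,\cdot)$. Your inequality is in fact salvageable with $|g(0,0)|$ in place of $\|g\|_\infty$, by the same $1$-Lipschitz observation, but as written it is wrong for unbounded $g$.

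Second---and this is the substantive gap---your mass-balance estimates $\E^0\bigl[\int_0^tX_k^+b_k(X_k,u)\,ds\bigr]\le C_1(t)$ are correct but do not do the job you assign them. After the Lipschitz split, the residual $\mathcal N_k$-contribution contains $\int_0^t\E^0\bigl[X_k(s)^+\,L\,|X_k(s)-X_k'(s)|\bigr]\,ds$, and this is precisely what must be absorbed into $\int_0^tC(s)\,\E^0[\|X-X'\|_s]\,ds$ to run Gronwall. Mass-balance controls $\int\E^0[X_k^+b_k(X_k,u)]$, not $\int\E^0[X_k^+|X_k-X_k'|]$; passing to the branch $b_k+b_k'$ of your minimum produces a term with \emph{no} $|X_k-X_k'|$ factor, hence an additive constant on the right-hand side---which would make the bound non-trivial even when $X_0=X_0'$ and $u=u'$, contradicting uniqueness. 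The paper does not introduce mass-balance at all: it works directly with the simpler \emph{a priori} first-moment bound $\E^0[Y_k(t)^+]\le\max(Y_k(0)^+,m_k)+\|a_k\|t$ (used on the factors multiplying the deterministic $\|u-u'\|_s$) together with the Lipschitz bounds for the \emph{extended} $a_k,b_k,\mu_k$, and passes straight to the pre-Gronwall inequality. Drop the mass-balance detour and follow that route.
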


\begin{proof}
Any solution $(Y(t), {t\ge0})=(Y_k(t), {1\le k \le K}, {t\ge0})$ of \eqref{sdeparam}, for any $u$ and initial condition, satisfies the \emph{a priori} affine growth bound  
\begin{equation}\label{upbd}
\E^0\bigl[ Y_k(t)^+ \bigr] \leq \max(Y_k(0)^+ , m_k) +\lVert a_k\rVert t\,,
\end{equation}
which notably allows to prove, classically under the present assumptions, existence and uniqueness 
of solutions for \eqref{sdeparam} recursively from jump instant to next jump instant, since it implies that these cannot accumulate. 
Let 
\[
X(t)=\phi(X_0,u)(t)\,,
\quad
X'(t)=\phi(X'_0,u')(t)\,, 
\qquad
t\ge0\,,
\] 
and $\E^0$ denote the conditional expectation given $X_0$ and $X'_0$.
Then
\begin{equation}
\label{corin}
\|X_k-X_k'\|_t \leq |X_k(0)-X_k'(0)| + I_1(t) + I_2(t) + I_3(t) + I_4(t)
\end{equation}
for
\begin{align*}
I_1(t) &= 
\int_0^t \int 
(1+w) \bigl|\ind{X_k(s-) = -1} \ind{0 < z < \lambda_k(u(s))}
\\
& \kern30mm
- \ind{X'_k(s-) = -1} \ind{0 < z < \lambda_k(u'(s))} 
\bigr|
\,\mathcal{A}_k(dw,dz,ds)\,,
\\
I_2(t) &= 
\int_0^t \bigl| 
\ind{X_k(s-)\ge 0} a_k(X_k(s), u(s))
- \ind{X'_k(s-)\ge 0} a_k(X'_k(s), u'(s))
\bigr|\,ds\,,
\\
I_3(t) &= 
\int_0^t \int \bigl|
X_k(s-)^+ \ind{0 < z < b_k(X_k(s-), u(s))}
\\
& \kern30mm
- X'_k(s-)^+ \ind{0 < z < b_k(X'_k(s-), u'(s))}
\bigr|\,\mathcal{N}_k(dz,ds)\,,
\\
I_4(t) &= 
\int_0^t \int \bigl| 
(1 + X_k(s-)) \ind{0 < z < \mu_k(X_k(s-), u(s))}
\\
& \kern30mm
- (1 + X'_k(s-)) \ind{0 < z < \mu_k(X'_k(s-), u'(s))}
\bigr|\,\mathcal{D}_k(dz,ds)\,,
\end{align*}
where the fact that $\ind{x\ge0}x = x^+$ and $\ind{x\ge0}(1+x) =1+x$
for $x$ in  $\R_+\cup\{-1\}$ is used for regularizing some integrands.

Compensating the Poisson point process $\mathcal{A}_k$ and the inequality 
\begin{equation}
\label{indlip}
\bigl|\ind{x = -1} - \ind{x' = -1}\bigr|
= \bigl|\ind{x \ge 0} - \ind{x' \ge 0}\bigr|
\le |x - x'|\,,
\;\;
x,x'\in \R_+\cup\{-1\}\,,
\end{equation}
(the functions $\mathbbm{1}_{\R_+}$ and $\ind{-1}$ are $1$-Lipschitz on $\R_+ \cup \{-1\}$)
yield 
\begin{align*}
\E^0[I_1(t)] 
&=
\E^0\biggl[
\int_0^t \int 
(1+w) \bigl|\ind{X_k(s) = -1} \ind{0 < z < \lambda_k(u(s))}
\\
& \kern20mm
- \ind{X'_k(s) = -1} \ind{0 < z < \lambda_k(u'(s))} 
\bigr|
\,\alpha_k(dw)\,dz\,ds
\biggr]
\\
&\le
(1+m_k)\E^0\biggl[
\int_0^t \int \Bigl[
\bigl|\ind{X_k(s) = -1} - \ind{X'_k(s) = -1}\bigr| \ind{0 < z < \lambda_k(u(s))}
\\
& \kern20mm
+ \ind{X'_k(s) = -1} \bigl|\ind{0 < z < \lambda_k(u(s))} - \ind{0 < z < \lambda_k(u'(s))} 
\bigr| \Bigr]\,dz\,ds
\biggr]
\\
&\le
(1+m_k)\E^0\biggl[
\int_0^t \int \Bigl[
\bigl|X_k(s) - X'_k(s)\bigr| \lambda_k(u(s))
+ \bigl|\lambda_k(u(s)) - \lambda_k(u'(s))\bigr| \Bigr]
\,ds
\biggr]\,.
\end{align*}

The singularities due to the cemetery state $-1$ now cause further difficulties. 
A \emph{key step} in the proof is that, since
the functions $a_k$, $b_k$, and $\mu_k$, defined on $\R_+\times \R_+^J$,
can be extended \emph{arbitrarily} to $(\R_+\cup\{-1\}) \times \R_+^J$ without altering 
$I_2$, $I_3$, and $I_4$, we may introduce adequate Lipschitz extensions for such functions $f$ by setting $f(-1,u) = f(0,u)$, \emph{i.e.},
\[
f(x,u) = f(x^+,u)\,,
\qquad
(x,u) \in (\R_+\cup\{-1\}) \times \R_+^J\,.
\]
Since $x \mapsto x^+$ is $1$-Lipschitz, the extensions are Lipschitz
on $(\R_+\cup\{-1\}) \times \R_+^J$ with same Lipschitz coefficients 
as the original function on $\R_+ \times \R_+^J$.
(Note that the extension vanishing on $\{-1\}\times \R^J$
may well not be Lipschitz on $(\R_+\cup\{-1\}) \times \R_+^J$, 
see for instance $(x,u)\in \R_+ \times \R^J \mapsto x+u$.)

In the sequel, we use these adequate Lipschitz extensions of $a_k$, $b_k$, and $\mu_k$, which play an important role in regularizing the integrands, and invisibly simplify greatly the computations.
Simple computations and \eqref{indlip} yield 
\begin{align*}
\E^0[I_2(t)] 
&=
\E^0\biggl[
\int_0^t 
\bigl|\ind{X_k(s) \ge 0} a_k(X_k(s), u(s)) - \ind{X'_k(s) \ge 0} a_k(X'_k(s), u'(s)) \bigr|
\,ds
\biggr]
\\
&\le
\E^0\biggl[
\int_0^t \int \Bigl[
\bigl|\ind{X_k(s) \ge 0} - \ind{X'_k(s) \ge 0}\bigr| a_k(X_k(s), u(s))
\\
& \kern20mm
+ \ind{X'_k(s) \ge 0} \bigl|a_k(X_k(s), u(s)) - a_k(X'_k(s), u'(s)) \bigr| 
\Bigr]\,dz\,ds
\biggr]
\\
&\le
\E^0\biggl[
\int_0^t \int \Bigl[
\bigl|X_k(s) - X'_k(s)\bigr| a_k(X_k(s), u(s))
\\
& \kern20mm
+ \bigl|a_k(X_k(s), u(s)) - a_k(X'_k(s), u'(s)) \bigr| 
\Bigr]\,dz\,ds
\biggr]\,,
\end{align*}
compensating the Poisson point process $\mathcal{N}_k$ yields
\begin{align*}
\E^0[I_3(t)] 
&=
\E^0\biggl[
\int_0^t \int 
 \bigl|X_k(s)^+ \ind{0 < z < b_k(X_k(s), u(s))}
- X'_k(s)^+ \ind{0 < z < b_k(X'_k(s), u'(s))} 
\bigr|\,dz\,ds
\biggr]
\\
&\le
\E^0\biggl[
\int_0^t \int \Bigl[
\bigl|X_k(s)^+ - X'_k(s)^+\bigr| \ind{0 < z < b_k(X_k(s), u(s))}
\\
& \kern20mm
+ X'_k(s)^+ \bigl|\ind{0 < z < b_k(X_k(s), u(s))} - \ind{0 < z < b_k(X'_k(s), u'(s))} \bigr| 
\Bigr]\,dz\,ds
\biggr]
\\
&\le
\E^0\biggl[
\int_0^t \Bigl[
\bigl|X_k(s) - X'_k(s)\bigr| b_k(X_k(s), u(s))
\\
& \kern20mm
+ X'_k(s)^+ \bigl|b_k(X_k(s), u(s)) - b_k(X'_k(s), u'(s)) \bigr| 
\Bigr]
\,ds
\biggr]
\end{align*}
and similarly, replacing $x^+$ by $1+x$ in the computations,
\begin{align*}
\E^0[I_4(t)] 
&\le
\E^0\biggl[
\int_0^t \Bigl[
\bigl|X_k(s) - X'_k(s)\bigr| \mu_k(X_k(s), u(s))
\\
& \kern20mm
+ (1+X'_k(s))\bigl|\mu_k(X_k(s), u(s)) - \mu_k(X'_k(s), u'(s)) \bigr| 
\Bigr]
\,ds
\biggr]\,.
\end{align*}

Going back to \eqref{corin}, the bounds on $\E^0[I_1(t)]$, $\E^0[I_2(t)]$, $\E^0[I_3(t)]$, and $\E^0[I_4(t)]$, the affine growth bound \eqref{upbd}, and the Lipschitz bounds which
remain valid for the \emph{extensions} of $a_k$, $b_k$ and $\mu_k$ on $(\R_+\cup\{-1\})\times\R_+^J$), yield
\begin{equation*}
\E^0\bigl[\|X_k-X_k'\|_t \bigr] 
\leq 
|X_k(0)-X_k'(0)|
+\int_0^t C_k(s) 
\left(\|u(s)-u'(s)\|+\E^0\bigl[\|X_k-X_k'\|_s\bigr]\right)ds
\end{equation*}
where $C_k(s) = A_k(1 + s + \Vert u\Vert_s + \Vert X_0\Vert+\Vert X'_0\Vert)$ for 
some constant $A_k<\infty$ depending only
on $a_k$, $b_k$, $\lambda_k$, $\mu_k$, and $m_k$, $1\le k \le K$.
The Gronwall Lemma then yields
\begin{align*}
\E^0\bigl[\|X_k-X_k'\|_t \bigr] 
&\leq 
\left[\left|X_k(0){-}X_k'(0)\right| +\int_0^t C_k(s) \|u(s){-}u'(s)\|\,ds\right]e^{\int_0^tC_k(s)\,ds}
\\
&\leq \left[\rule{0mm}{4mm}|X_k(0)-X_k'(0)| +\int_0^t C_k(s)\|u-u'\|_s\,ds\right]e^{tC_k(t)}
\end{align*}
and the proof of the proposition follows.
\end{proof}

One  of the  main motivations  of  this study  is to  obtain  results which  are valid  for
functions $\mu_k$ and $b_k$ of the form given 
in~\eqref{outputprop} and \eqref{bafopa}. Such functions may have quadratic behavior, and Proposition~\ref{lp} must be adapted to this case,
which replaces $\Vert X_0\Vert+\Vert X'_0\Vert$ by $\Vert X_0\Vert^2+\Vert X'_0\Vert^2$ in the bound for $C(t)$, but for simplicity this will be stated only when needed.

In order to control the evolution of $(W(t), {t\ge0})$ and
describe its  stationary behavior, the simple assumption that initial conditions be uniformly
bounded is not satisfactory.  
For these purposes, exponential  and  Gaussian  moment  assumptions  are introduced.

\bigskip\noindent
\textbf{Condition (C)}  
{\it
It  is  said  to  hold for a  family 
$\{X_0^\theta, {\theta\in  \Theta}\}$
of $(\R_+\cup\{-1\})^K$-valued r.v., 
for the functions $b_k : \R_+ \times \R_+^J \to \R_+$ and $\mu_k : \R_+ \times \R_+^J \to \R_+$
for $1\le k\le K$, and  for $\eps>0$, 
when at least one of the two following conditions is satisfied:
\begin{enumerate}
\item 
The functions $b_k$ and $\mu_k$ are Lipschitz for 
all $1\leq k\leq K$, 
and the $X_0^\theta$ for $\theta\in  \Theta$   
have a uniform exponential moment of order $\eps$:
\[
\sup_{\theta\in\Theta} \E\bigl[\exp(\eps \| X_0^\theta\|) \bigr] < \infty\,.
\]
\item 
There are Lipschitz functions $\beta_k: \R_+^J \to \R_+$ and $\nu_k: \R_+^J \to \R_+$ such that
\[
b_k (w, u) = w \beta_k (u)\,,
\quad
\mu_k (w, u) = w \nu_k (u)\,,
\qquad
1\leq k\leq K\,,
\]
and the $X_0^\theta$,  $\theta\in  \Theta$,   have a uniform Gaussian moment of order $\eps$:
\[
\sup_{\theta\in\Theta} \E\bigl[\exp\bigl(\eps \| X_0^\theta \|^2\bigr) \bigr]<\infty\,.
\]
\end{enumerate}
}
\noindent
The following theorem establishes the existence and uniqueness result for Equation~\eqref{nlsde}. 

\begin{theorem}\label{eunlsde}
If the functions $a_k : \R_+ \times  \R_+^J \rightarrow \R_+$ are bounded
and Lipschitz and $\lambda_k : \R_+^J \rightarrow \R_+$ are Lipschitz, $1\leq k\leq K$,
and if Condition~(C) holds for the $(\R_+\cup\{-1\})^K$-valued r.v.\ $W_0$, 
the functions $b_k$ and $\mu_k$ for $1\leq k\leq K$, and $\eps>0$, then there is
pathwise existence  and uniqueness  of the solution $(W(t), {t\ge0})$ of the nonlinear
stochastic differential equation~\eqref{nlsde} starting at $W_0$. 

In this  case, the solution depends  continuously on the initial  condition in the
following way: if $(W(t), {t\ge0})$ and $(W'(t), {t\ge0})$ are solutions of\eqref{nlsde} with
respective initial conditions $W_0$ and $W_0'$ having the same moment in Condition~(C),
then for $T\geq 0$ there exists a constant $A_T$ such that  
\begin{multline}\label{condinit}
\E\left[\|W-W'\|_T\right] := \E\biggl[\sup_{s\le T}\|W(s)-W'(s)\|\biggr]
\\ 
\leq A_T
\E\left[\|W_0-W_0'\|e^{\eps(\|W_0\|^\ell+\|W_0'\|^\ell)/2)}\right]
e^{\E[\exp(\eps(\|W_0\|^\ell+\|W_0'\|^\ell))]}
\end{multline}
where $\ell=1$ or $2$ depending on whether the moment in Condition~(C) is exponential
or Gaussian.
\end{theorem}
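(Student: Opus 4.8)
The plan is to realise~\eqref{nlsde} as a fixed point of a map acting on the interaction function and to apply Proposition~\ref{lp} repeatedly. For $u\in C(\R_+,\R^J)$ let $\phi(W_0,u)$ be the solution of~$(\mathcal{E}_u)$ started at $W_0$; existence and uniqueness of $\phi(W_0,u)$ on each $[0,T]$ follows as in Proposition~\ref{eusde}, since a continuous $u$ is locally bounded (so that $\lambda_k(u(\cdot))$, $b_k(\cdot,u(\cdot))$, $\mu_k(\cdot,u(\cdot))$ are locally bounded) and the affine growth bound~\eqref{upbd} rules out accumulation of jumps. Put
\[
\Phi(u)(t)=\Bigl(\textstyle\sum_{k=1}^K A_{jk}\,p_k\,\E\bigl[\phi(W_0,u)_k(t)^+\bigr]\Bigr)_{1\le j\le J}\,.
\]
Then $(W(t),t\ge0)$ solves~\eqref{nlsde} with $W(0)=W_0$ if and only if $W=\phi(W_0,u)$ for some $u$ with $\Phi(u)=u$, so it suffices to show $\Phi$ has a unique fixed point. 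By~\eqref{upbd}, $\|\Phi(u)(t)\|\le R_0(T):=\max_j\sum_k A_{jk}p_k(\E\|W_0\|+m_k+\|a_k\|T)$ for $t\le T$, uniformly in $u$, so on any interval $[0,T]$ the map $\Phi$ sends the closed ball $B=\{u:\|u\|_T\le R_0(T)\}$ of $C([0,T],\R^J)$ into itself, and we may work inside $B$.

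I would then carry out a Picard argument on a short interval $[0,T_0]$. Applying~\eqref{esteq} with $X_0=X_0'=W_0$ and $u,u'\in B$ gives
\[
\E\bigl[\|\phi(W_0,u)-\phi(W_0,u')\|_t\,\big|\,W_0\bigr]\le e^{tC(t)}\int_0^t C(s)\,\|u-u'\|_s\,ds\,,
\]
where now $C(t)=A(1+t+R_0(T)+2\|W_0\|^\ell)$, with $\ell=1$ under Condition~(C)(1) and $\ell=2$ under Condition~(C)(2) — in the latter case $b_k=w\beta_k(u)$ and $\mu_k=w\nu_k(u)$ are only locally Lipschitz in $w$, so one uses the variant of Proposition~\ref{lp} noted after its proof, in which $\|X_0\|+\|X_0'\|$ is replaced by $\|X_0\|^2+\|X_0'\|^2$ in $C(t)$. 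Since $|x^+-y^+|\le|x-y|$, taking expectations over $W_0$ yields $\|\Phi(u)-\Phi(u')\|_t\le\kappa_{T_0}\int_0^t\|u-u'\|_s\,ds$ for $t\le T_0$, where $\kappa_{T_0}=C'\,\E\bigl[C(T_0)e^{T_0C(T_0)}\bigr]<\infty$ as soon as $T_0$ is small enough that $2T_0 A<\eps$, thanks precisely to the uniform exponential (resp. Gaussian) moment in Condition~(C). Iterating, $\|\Phi^n(u)-\Phi^n(u')\|_{T_0}\le\kappa_{T_0}^n(T_0^n/n!)\|u-u'\|_{T_0}$, so some power of $\Phi$ is a strict contraction on the complete space $B\subset C([0,T_0],\R^J)$; its unique fixed point yields pathwise existence and uniqueness of $W$ on $[0,T_0]$.

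To extend to $\R_+$, I would restart at $T_0$ from $W(T_0)$, and here lies the main obstacle: one must check that $W(T_0)$ again satisfies Condition~(C), with a smaller but still positive moment order. This rests on the pathwise fact that between two consecutive ON-transitions the window $W_k$ grows only at the bounded speed $\|a_k\|$ and otherwise jumps downwards (multiplicative decrease), so that $W_k(T_0)^+$ is stochastically dominated by $W_k(0)^++\|a_k\|T_0$ plus a compound-Poisson term built from the reset law $\alpha_k$ and from a Poisson number of ON-transitions of rate $\sup_{s\le T_0}\lambda_k(u(s))<\infty$; controlling its exponential (resp. Gaussian) moment requires the corresponding moment for $\alpha_k$ (automatic when $\alpha_k=\delta_0$, and in general part of the standing assumptions). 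Granting this, the local result applies on a further interval of comparable length, and since the usable interval length stays bounded below over compact time sets, patching yields existence and uniqueness on all of $\R_+$.

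For the continuity estimate~\eqref{condinit}, apply~\eqref{esteq} once more, now with $u=u_W$, $u'=u_{W'}$ the two fixed-point interactions and $X_0=W_0$, $X_0'=W_0'$, so that
\[
\E\bigl[\|W-W'\|_t\,\big|\,W_0,W_0'\bigr]\le\Bigl[\|W_0-W_0'\|+\int_0^t C(s)\|u_W-u_{W'}\|_s\,ds\Bigr]e^{tC(t)}
\]
with $C(t)=A(1+t+\|u_W\|_t+\|W_0\|^\ell+\|W_0'\|^\ell)$. Using $\|u_W-u_{W'}\|_s\le C'\,\E[\|W-W'\|_s]$ and taking expectations over $(W_0,W_0')$, the function $g(t)=\E[\|W-W'\|_t]$ satisfies a linear integral inequality whose inhomogeneous term is $\E[\|W_0-W_0'\|\,e^{tC(t)}]$ and whose kernel is controlled by $\E[e^{tC(t)}]$. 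Splitting $e^{tC(t)}$ into a factor $e^{\eps(\|W_0\|^\ell+\|W_0'\|^\ell)/2}$ retained inside the expectation alongside $\|W_0-W_0'\|$ and a remaining factor whose expectation is bounded using $\E[e^{\eps(\|W_0\|^\ell+\|W_0'\|^\ell)}]$ (valid for $t$ small), then invoking Gronwall's lemma together with the restart bookkeeping — which is what forces $A_T$ to grow with $T$ — produces exactly the stated bound, with $\ell=1$ in the exponential case and $\ell=2$ in the Gaussian case.
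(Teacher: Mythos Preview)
Your strategy---encode \eqref{nlsde} as a fixed point of $\Phi$ acting on the interaction $u$, use the Lipschitz estimate of Proposition~\ref{lp} to obtain a local contraction, and patch intervals together---is precisely the scheme the paper invokes by pointing to \cite[Theorem~4.2]{Graham:09} and substituting Proposition~\ref{lp} for \cite[Proposition~4.1]{Graham:09}. The paper gives no further detail, so your reconstruction is in the right spirit and the local step on $[0,T_0]$ is carried out correctly, including the adaptation of $C(t)$ to the quadratic case~(2) of Condition~(C).

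There is, however, a step that does not close under the stated hypotheses. For the restart at $T_0$ you need $W(T_0)$ to satisfy Condition~(C) with some $\eps'>0$, and your pathwise domination correctly reduces this to an exponential (resp.\ Gaussian) moment for the reset law $\alpha_k$. You then assert this is ``part of the standing assumptions''---but it is not: the only hypothesis on $\alpha_k$ anywhere in the paper is the first-moment condition $m_k=\int w\,\alpha_k(dw)<\infty$ used in Proposition~\ref{lp}. With a heavy-tailed $\alpha_k$ having $m_k<\infty$ but no exponential moment, $W_k(T_0)$ need not admit any exponential moment, and your restart fails. The paper's \emph{mutatis mutandis} glosses over exactly this point: in \cite{Graham:09} connections are permanent, so $W_k(T_0)\le W_k(0)+\|a_k\|T_0$ holds pathwise and the moment propagates trivially; here the ON-resets break that bound. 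So either an extra moment hypothesis on $\alpha_k$ is tacitly required (as you in effect conjecture), or the global-in-time argument must be reorganised to avoid restarting---your proof flags the obstacle honestly but does not resolve it within the hypotheses as written. The same caveat applies to the ``restart bookkeeping'' you invoke for the continuity estimate~\eqref{condinit}.
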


\begin{proof}
The proof follows \emph{mutatis mutandis} the proof of Graham-Robert~\cite[Theorem 4.2]{Graham:09},
substituting Proposition~\ref{lp} to \cite[Proposition 4.1]{Graham:09}, and we give some details
about this.
If (1) of Condition~(C) holds, then \eqref{esteq} (given in
Proposition~\ref{lp}) is substituted to \cite[(4.3)]{Graham:09}
(given in \cite[Proposition 4.1]{Graham:09}). If~(2) of Condition~(C) holds, then Proposition~\eqref{lp} and its proof must be adapted
as in the end of the proof of \cite[Theorem 4.2]{Graham:09}: the term $\E^0[I_3(t)]$ must be bounded using
\begin{multline*}
\bigl|b_k(X_k(s),u(s))-b_k(X_k'(s),u'(s))\bigr|
=
\bigl| X_k(s) \beta_k({u}(s)) - X'_k(s) \beta_k({u}'(s)) \bigr|
\\\leq
X_k(s) | \beta_k({u}(s)) - \beta_k({u}'(s)) |
+
\beta_k({u}'(s)) | X_k(s) - X'_k(s) |
\end{multline*}
and similarly for $\E^0[I_4(t)]$, and the rest of the proof is analogous, 
the supplementary multiplications by $X_k(s)$ and $X'_k(s)$ being handled by using the Gaussian moment assumption. 
\end{proof}


\section{Mean-field limit theorem for converging initial data}\label{secMF}\setcounter{equation}{0}

It is said that multi-indices ${N}=(N_k, {1\le k \le K})$ go to infinity, denoted by
${N} \to \infty$,
when 
\[
\min_{1\le k \le K}N_k \to \infty\,.
\] 
This is the case in the mean-field scaling, where \eqref{e:mfar} gives the relative growth rate of the numbers of class $k$ users. In this context, the
notions of exchangeability and chaoticity play a fundamental role. These properties are classical
for single-class systems, see Aldous~\cite{Aldous} and Sznitman~\cite{Sznitman} for example, 
but need to be extended to the present multi-class network. 

\begin{definition}
The family of r.v. 
$(X_{n,k}, {1\le n \le N_k}, {1\le k \le K})$
is \emph{multi-exchangeable} if its law is invariant under
permutation of the indexes \emph{within} the classes:
for any permutations $\sigma_k$ of $\{1,\ldots,N_k\}$ for $1\le  k\le K$, 
there holds the equality of laws 
\[
\mathcal{L}(X_{\sigma_k(n),k}, {1\le n \le N_k}, {1\le k \le K})
=
\mathcal{L}(X_{n,k}, {1\le n \le N_k}, {1\le k \le K})\,.
\]
A family $\bigl(X^{N}_{n,k}, {1\le n \le N_k}, {1\le k \le K}\bigr)$ 
of multi-class random variables
indexed by ${N}=(N_k, {1\le k \le K}) \in\N^K$ is  $P_1\otimes\cdots\otimes P_K$-\emph{multi-chaotic}, where each $P_k$ is a probability measure, if
for any $m\geq 1$ there holds the weak convergence of laws
\[
\lim_{{N} \to \infty} 
\mathcal{L}\bigl(X^{N}_{n,k}, {1\le n \le m}, {1\le k \le K}\bigr)
= P_1^{\otimes m} \otimes \cdots \otimes P_K^{\otimes m}\,.
\] 
\end{definition}

Hence, such a family of systems is multi-chaotic if and only if it becomes asymptotically
independent with particles of class~$k$ having law $P_k$.
A surprising result, not used in this  paper, is that a  family of multi-ex\-chan\-geable
multi-class systems is  multi-chaotic \emph{if and only if} the \emph{restriction} to each class
is chaotic, see Graham~\cite[Theorem~3]{Graham:08}.

The following theorem is the main mean-field convergence result.
The underlying topology on the corresponding functional space is uniform convergence
on compact sets.  

\begin{theorem} \label{mulex}
It is assumed that:
\begin{enumerate}
\item 
the mean-field regime \eqref{e:mfar} holds:
the multi-indices $N=(N_1, \dots, N_K)$ go to infinity so that 
\[
\lim_{N\to +\infty} \frac{N_k}{N_1+\cdots+N_K}=p_k\,,
\qquad
1\leq k\leq K\,,
\]
\item 
the $\R_+ \cup \{-1\}$-valued random variables 
\[
\bigl(W_{n,k}^{N}(0), {1\le n \le N_k}, {1\le k \le K}\bigr)
\]
are multi-exchangeable and $P_{1,0}\otimes\cdots\otimes P_{K,0}$-multi-chaotic, 
where $P_{k,0}$ is a probability distribution on
$\R_+ \cup \{-1\}$ for $1\leq k\leq K$, 
\item 
the functions $a_k : \R_+ \times \R_+^J \rightarrow \R_+$ are bounded and Lipschitz, 
the functions $\lambda_k : \R_+^J \rightarrow \R_+$ are Lipschitz, and
Condition~(C) holds for the random variables
\[
\{W_1^{N}(0), {N\in\N^K}\} = \{(W_{1,k}^{N}(0), {1\le k \le K}), {N\in\N^K}\}\,,
\]
the functions $b_k$ and $\mu_k$ for $1\le k \le K$, and $\eps>0$.
\end{enumerate}
Then, for $1\leq k\leq K$, in the sense of processes
\[
\lim_{N\to+\infty} \E\biggl|\frac{1}{N_k}\sum_{n=1}^{N_k} W_{n,k}^N(t) -\E(W_k(t))\biggr|=0
\]
and the family  of processes
\[
\bigl((W_{n,k}^{N}(t), {t\ge 0} ), {1\le n \le N_k}, {1\le k \le K}\bigr)
\]
given by the solutions of the SDE~\eqref{sde} starting at  the initial  conditions 
$\bigl(W_{n,k}^{N}(0)\bigr)$  is multi-exchangeable
and  $P_W$-multi-chaotic,  where  $P_W = P_{W_1}\otimes\cdots\otimes P_{W_K}$ is  the 
law of the solution 
\[
(W(t), {t\geq 0})=((W_k(t), {t\geq 0}), {1\leq k\leq K})
\]
of the nonlinear SDE  \eqref{nlsde}  with initial distribution $P_{1,0}\otimes
\cdots \otimes P_{K,0}$.  

In  particular,
$(W_{n,\cdot}^{N}(t), t\geq 0) := ((W_{n,k}^{N}(t), {1\le k  \le K}), {t\geq 0})$ 
converges in distribution to $(W(t), t\geq 0) = ((W_k(t), {1\le k  \le K}), {t\geq 0})$
when $N\to\infty$, all $n\geq  1$.

\end{theorem}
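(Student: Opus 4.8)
I would prove this by the coupling method for propagation of chaos, adapted to the multi-class setting and to the cemetery state. First, note that the nonlinear limit process is well defined: by multi-chaoticity (applied with $m=1$) the law of $W_1^N(0)$ converges weakly to $\bigotimes_k P_{k,0}$, and since the exponential, resp.\ Gaussian, moment functional of Condition~(C) is nonnegative and continuous, Fatou's lemma shows that $\bigotimes_k P_{k,0}$ inherits that moment; with the boundedness and Lipschitz hypotheses on $a_k$ and $\lambda_k$, Theorem~\ref{eunlsde} then provides the solution $(W(t),t\ge0)$ of~\eqref{nlsde}, its product law $P_W=\bigotimes_k P_{W_k}$, and the continuous interaction path $(u_W(t),t\ge0)$. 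On the probability space carrying the independent Poisson point processes $\mathcal{A}_{n,k},\mathcal{N}_{n,k},\mathcal{D}_{n,k}$, I introduce, besides the solution $(W^N_{n,k})$ of~\eqref{sde}, processes $(\widetilde W_{n,k})$ where $\widetilde W_{n,k}$ solves~\eqref{sdeparam} in coordinate $k$ with the \emph{deterministic} path $u=u_W$, driven by the \emph{same} $\mathcal{A}_{n,k},\mathcal{N}_{n,k},\mathcal{D}_{n,k}$, and started from $\widetilde W_{n,k}(0)$ which — using multi-chaoticity and the uniform integrability of Condition~(C) — is coupled to $W^N_{n,k}(0)$ so that the $\widetilde W_{n,k}(0)$ are i.i.d.\ in $n$ with law $P_{k,0}$, independent over $k$ and of the Poisson processes, and $\E[\|W^N_{1,k}(0)-\widetilde W_{1,k}(0)\|]\to0$ for $1\le k\le K$ (the multi-class analogue of the initial coupling of~\cite{Graham:09}). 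Since the coordinates of~\eqref{sdeparam} evolve independently once $u$ is frozen and the driving noises are i.i.d.\ in $(n,k)$, each vector $(\widetilde W_{n,k})_{1\le k\le K}$ is then a solution of~\eqref{nlsde}, these are independent over $n$, and $\widetilde W_{n,k}$ has law $P_{W_k}$.

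Second, I would establish a priori bounds \emph{uniform in $N$} on a finite horizon $T$. The affine growth bound~\eqref{upbd} already keeps $\E[\ovl{W}^N_k(t)]$, hence $\E[\|\ovl{U}^N\|_T]$, bounded in $N$; combining~\eqref{upbd} with Condition~(C) — case~(1) when $b_k,\mu_k$ are merely Lipschitz, case~(2) for the quadratic forms~\eqref{outputprop} and~\eqref{bafopa}, where the Gaussian moment is genuinely needed — propagates, uniformly in $N$, an exponential, resp.\ Gaussian, moment of $\|W^N_{n,k}\|_T$, of $\|\ovl{U}^N\|_T$, and likewise of $\|\widetilde W_{n,k}\|_T$. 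These bounds make the Gronwall prefactor $e^{tC(t)}$ of~\eqref{esteq} integrable after a Cauchy--Schwarz split — the device used to pass from Proposition~\ref{lp} to~\eqref{condinit} — and supply the uniform integrability needed for the law of large numbers below. One also records that Proposition~\ref{lp} and its proof remain valid verbatim when $u,u'$ range over the Skorohod space $\DD(\R_+,\R^J)$, continuity playing no role in the estimate; this is needed since the empirical interaction term $\ovl{U}^N$ is only right-continuous with left limits.

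Third, I would run the Gronwall comparison. Given the sample path of $\ovl{U}^N$, each $W^N_{n,k}$ solves~\eqref{sdeparam} in coordinate $k$ with $u=\ovl{U}^N$, so applying Proposition~\ref{lp} coordinate by coordinate and writing $\E^0$ for conditional expectation given the initial data,
\[
\E^0\bigl[\|W^N_{n,k}-\widetilde W_{n,k}\|_t\bigr]\le\Bigl[\|W^N_{n,k}(0)-\widetilde W_{n,k}(0)\|+\int_0^t C(s)\,\|\ovl{U}^N-u_W\|_s\,ds\Bigr]e^{tC(t)}\,.
\]
Decompose the interaction discrepancy: for each $j$, $\ovl{U}^N_j(t)-u_{W,j}(t)$ is a linear combination over $k$ of the term $(N_k/|N|-p_k)\,\ovl{W}^N_k(t)$, negligible by~\eqref{e:mfar} and the moment bound; of the feedback term $p_k\bigl(\ovl{W}^N_k(t)-\tfrac1{N_k}\sum_n\widetilde W_{n,k}(t)^+\bigr)$, bounded by $\tfrac1{N_k}\sum_n|W^N_{n,k}(t)-\widetilde W_{n,k}(t)|$; and of $p_k\bigl(\tfrac1{N_k}\sum_n\widetilde W_{n,k}(t)^+-\E(W_k(t)^+)\bigr)$, the fluctuation of the empirical mean of the i.i.d.\ $\DD(\R_+,\R_+\cup\{-1\})$-valued processes $\widetilde W_{n,k}$, which tends to $0$ in $L^1$ uniformly on $[0,T]$ by a functional law of large numbers (the uniform moments giving uniform integrability). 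Taking expectations, using multi-exchangeability so that $g^N_k(t):=\E[\|W^N_{1,k}-\widetilde W_{1,k}\|_t]$ does not depend on $n$, bounding $e^{tC(t)}$ by Cauchy--Schwarz and the second step, and inserting the decomposition, one obtains a deterministic kernel $\Gamma\in L^1_{\mathrm{loc}}$ with
\[
g^N_k(t)\le\eta^N_k+\int_0^t\Gamma(s)\Bigl(\rho^N(s)+\sum_{k'=1}^{K}g^N_{k'}(s)\Bigr)ds\,,\qquad 0\le t\le T,
\]
where $\eta^N_k\to0$ by the initial coupling and $\sup_{s\le T}\rho^N(s)\to0$ by steps one and three. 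Summing over $k$ and applying Gronwall's lemma gives $\max_k\|g^N_k\|_T\to0$. Combined with the $L^1$ law of large numbers for $\tfrac1{N_k}\sum_n\widetilde W_{n,k}(t)$ this gives the asserted convergence of the empirical means in the sense of processes; $P_W$-multi-chaoticity follows because the finite marginals $\mathcal{L}(W^N_{n,k},n\le m,k\le K)$ differ by $o(1)$ from $\mathcal{L}(\widetilde W_{n,k},n\le m,k\le K)=\bigotimes_k P_{W_k}^{\otimes m}$; and multi-exchangeability of $(W^N_{n,k}(t),t\ge0)$ is immediate from the within-class symmetry of~\eqref{sde}, the multi-exchangeability of the initial data, and the i.i.d.-within-classes structure of the driving Poisson processes.

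I expect the second step to be the main obstacle: propagating \emph{uniformly in $N$} the exponential (case~(1)) or Gaussian (case~(2)) moments of the state through the nonlinear feedback loop, when $\lambda_k,b_k,\mu_k$ are merely Lipschitz, hence unbounded, when $b_k$ and $\mu_k$ may grow linearly in the window size, and when users are continually resurrected from the cemetery state with law $\alpha_k$. This is precisely where the OFF state makes the analysis more delicate than in the permanent-connection model of~\cite{Graham:09}; once these bounds and the Lipschitz extensions introduced in the proof of Proposition~\ref{lp} are available, the remaining steps proceed \emph{mutatis mutandis} as there.
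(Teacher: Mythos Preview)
Your proposal is correct and follows essentially the same approach as the paper: the coupling method for propagation of chaos, with Proposition~\ref{lp} supplying the key Lipschitz-type estimate and the argument closed by Gronwall, exactly as in Graham--Robert~\cite[Theorem~5.1]{Graham:09}. The paper's own proof is in fact a two-line remark that the argument of~\cite[Theorem~5.1]{Graham:09} carries over \emph{mutatis mutandis} once Proposition~\ref{lp} is substituted for~\cite[Proposition~4.1]{Graham:09} (with the quadratic adaptation in case~(2) of Condition~(C)); your sketch is a faithful detailed expansion of that deferred argument, and you correctly isolate the uniform-in-$N$ moment propagation as the step where the cemetery state and the unbounded $\lambda_k,b_k,\mu_k$ require the most care.
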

\begin{proof}
As in the proof for Theorem~\ref{eunlsde}, the proof follows the proof of
Graham-Robert~\cite[Theorem 5.1]{Graham:09}, substituting
Proposition~\ref{lp}  to \cite[Proposition 4.1]{Graham:09}
if (1) of Condition~(C) holds, and adapting 
appropriately Proposition~\ref{lp} if (2) of Condition~(C) holds.
\end{proof} 

The following is a simple consequence of the above result and of exchangeability.
 
\begin{corollary}
Under the assumptions and notations of Theorem~\ref{mulex}, 
the convergence in law of the empirical distributions
\[
\lim_{{N}\to\infty} \Lambda_k^{N} = P_{W_k}\,,
\qquad
1\le k \le K\,,
\]
holds for the weak topology on $\mathcal{P}(\DD(\R_+, \R_+\cup\{-1\}))$ with  $\DD(\R_+, \R_+\cup\{-1\})$
endowed with the Skorohod topology.

\end{corollary}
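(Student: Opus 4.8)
The plan is to reduce the statement to the classical equivalence, valid over any Polish state space, between chaoticity of an exchangeable system and convergence in law of its empirical measure to the corresponding deterministic limit; see Sznitman~\cite{Sznitman} or Aldous~\cite{Aldous}. All the probabilistic content is already contained in Theorem~\ref{mulex}, and what remains is measure-theoretic bookkeeping together with a passage to class-$k$ marginals.

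First I would record that the path space $E := \DD(\R_+, \R_+\cup\{-1\})$ equipped with the Skorohod topology is Polish, since $\R_+\cup\{-1\}$ is a closed subset of $\R$ and hence Polish for the trace metric (see Billingsley~\cite{Billingsley:02}); consequently $\mathcal{P}(E)$ with the weak topology is again Polish, and the classical results quoted above apply on $E$ without modification.

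Next, fix $k$ with $1\le k\le K$ and consider the class-$k$ subfamily of path-valued random variables $\bigl((W_{n,k}^N(t), {t\ge0}), {1\le n\le N_k}\bigr)$. It is exchangeable, being the restriction of the multi-exchangeable family of Theorem~\ref{mulex} to permutations that are the identity on every class $j\ne k$. It is moreover $P_{W_k}$-chaotic: by the $P_W$-multi-chaoticity established in Theorem~\ref{mulex}, for each $m\ge1$ the law of $\bigl((W_{n,j}^N(t), {t\ge0}), {1\le n\le m}, {1\le j\le K}\bigr)$ converges weakly, in $E^{Km}$, to $\bigotimes_{j=1}^{K}P_{W_j}^{\otimes m}$; projecting onto the coordinates of class $k$ — a continuous map, hence preserving weak convergence — yields that the law of $\bigl((W_{n,k}^N(t), {t\ge0}), {1\le n\le m}\bigr)$ converges to $P_{W_k}^{\otimes m}$, which is precisely the defining property of $P_{W_k}$-chaoticity for the class-$k$ system.

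Finally I would invoke the classical equivalence: for an exchangeable sequence $(X_n)_{1\le n\le N}$ of $E$-valued random variables with $E$ Polish, being $P$-chaotic is equivalent to the weak convergence in law of the empirical measure $\frac1N\sum_{n=1}^{N}\delta_{X_n}$ towards the constant $P\in\mathcal{P}(E)$ (the limit being deterministic, this is also equivalent to convergence in probability). Applying this with $X_n = (W_{n,k}^N(t), {t\ge0})$, $N = N_k$, and $P = P_{W_k}$ gives $\Lambda_k^N \to P_{W_k}$ in law for the weak topology on $\mathcal{P}(E)$; carrying this out for every $1\le k\le K$ completes the argument. No step is a genuine obstacle, since the corollary is essentially a restatement of the propagation-of-chaos theorem; the only points needing routine care are the Polishness of $E$ under the Skorohod topology, so that the classical equivalence is available verbatim, and the observation that restricting to a single class preserves both exchangeability and chaoticity.
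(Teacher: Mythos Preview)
Your proposal is correct and matches the paper's approach: the paper provides no explicit proof, stating only that the corollary is ``a simple consequence of the above result and of exchangeability,'' which is precisely the route you take by restricting the multi-exchangeable, $P_W$-multi-chaotic family to a single class and invoking the classical equivalence between chaoticity and convergence of empirical measures on the Polish path space $\DD(\R_+,\R_+\cup\{-1\})$. Your write-up merely supplies the routine details the paper leaves implicit.
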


\section{Equilibrium behavior}
\label{secEqui}

\subsection{Stationary distributions for the nonlinear SDE}
The basic equilibrium properties of $(W(t), {t\geq 0})=((W_k(t), {t\geq 0}),
{1\leq  k\leq  K})$, solution  of  the  nonlinear  SDE~\eqref{nlsde}
describing the limit user evolution, are now succinctly studied,
in  particular its stationary distributions. 

Note  that, due  to the non-linearity, $(W(t),  {t\geq 0})$ is \emph{not} a homogeneous Markov
process, and therefore the classical theory concerning the convergence
toward equilibrium of Markov processes does not apply.

The coordinates of the process $(W(t), {t\geq 0})$ evolve independently,
and are coupled through the presence in the coefficients
of Equation~\eqref{nlsde} of the load vector $(u_W(t), {t\geq 0})$ given by
\begin{equation}\label{fpe}
u_W(t)=(u_{W,j}(t), 1\leq j\leq J)\,,
\qquad
u_{W,j}(t) = \sum_{k=1}^K A_{jk} p_k \E(W_k(t)^+)\,.
\end{equation}
If a stationary distribution exists, and is taken as the initial distribution, then
$u_W(t)$ is constant in $t$, and hence
$(W(t), {t\geq 0})$ is a stationary process with independent coordinates.

One begins with the analysis of the equilibrium of a single process $(W_k(t), {t\geq 0}))$ when the load vector
$u_W(t)$ is replaced by a constant $u$, then examines the fixed-point equation
resulting from the coupling of the coordinates through Relation~\eqref{fpe}, which in equilibrium does not depend on $t$.

\subsection{Invariant measures for a generic process}

Throughout this section a load vector $u = (u_j, {1\le j \le J})\in\R_+^J$ is fixed
without further mention, and simplified notations
are used, dropping the index $k$ from notations. Normal notation will be resumed in the following section.

One studies the invariant measures for the classic Itô-Skorohod SDE
on $\R_+\cup\{-1\}$
\begin{align}
dW(t) & =
\ind{W(t-) = -1}
\int (1+w) \ind{0 < z < \lambda\left(u\right)}
\,\mathcal{A}(dw, dz, dt)\notag 
\\ \qquad 
& \quad
+ \ind{W(t-)\ge 0} \Bigl[ a\left(W(t-), u\right)\, dt 
- (1-r) W(t-) 
\int \ind{0 < z < b\left(W(t-), u\right)}
\,\mathcal{N}(dz, dt)\notag 
\\
& \qquad 
- (1 + W(t-))
\int \ind{0 < z < \mu\left(W(t-), u\right)}
\,\mathcal{D}(dz, dt)
\Bigr]
\label{sdeg}
\end{align}
driven by Poisson point processes ${\cal A}$ on $\R_+^3$  with intensity measure
$\alpha(dw)dzdt$, and ${\cal D}$ and ${\cal N}$ on $\R_+^2$ with intensity measure $dzdt$.
An important element of the study will be the  SDE on $\R_+$, corresponding to the evolution 
of a permanent connection,
\begin{equation}\label{perm}
dV(t) =
 a\left(V(t-), u\right)\, dt 
- (1-r) V(t-) 
\int \ind{0 < z < b\left(V(t-), u\right)}
\,\mathcal{N}(dz, dt)\,. 
\end{equation}
For definitions and results on Harris ergodicity, see
Nummelin~\cite{Nummelin:02}, Asmussen~\cite{Asmussen}. 
Actually, classical positive recurrent state techniques are used.
\begin{theorem} \label{th-ergoV}
Assume that 
\begin{enumerate}
\item the function $x\mapsto a(x,u)$ is Lipschitz bounded, 
and $\inf\{a(x,u):x\ge0\}>0$,
\item the function $x\mapsto b(x,u)$ is locally bounded, and 
$\inf\{b(x,u):x\ge x_0\} >0$ for some $x_0\ge0$.
\end{enumerate}
Then, for given initial conditions, there exists a unique solution $(W(t), {t\ge0})$ 
of \eqref{sdeg}, and a unique solution $(V(t), {t\ge0})$ 
of \eqref{perm}, and this defines Markov processes.  
Moreover, $(V(t), {t\ge0})$ has a positive recurrent state and 
hence a stationary distribution $\pi^V$, 
and is in particular Harris ergodic.
\end{theorem}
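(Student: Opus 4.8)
The plan is to treat the two stochastic differential equations in turn: obtain existence, uniqueness and the Markov property by the standard jump-by-jump construction, and then derive the ergodicity of $(V(t),{t\ge0})$ by exhibiting an explicit level that acts as a positive recurrent regeneration state. With $u$ fixed, in both \eqref{sdeg} and \eqref{perm} the quantity $\lambda(u)$ is a constant, $a(\cdot,u)$ is Lipschitz, and $b(\cdot,u)$ (and, for \eqref{sdeg}, $\mu(\cdot,u)$) is locally bounded, so the argument proving Proposition~\ref{eusde} applies: any solution obeys the a priori affine bound $V(s)\le V(0)+\bar a\,s$ for all $s\ge0$, where $\bar a=\sup_{x\ge0}a(x,u)<\infty$, hence the jump rates stay bounded along each path, jumps do not accumulate, the solution is constructed recursively between successive jumps, and it defines a strong Markov piecewise-deterministic process.

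For the recurrence analysis of $V$, take $f(x)=1+x$ on $\R_+$. The extended generator of \eqref{perm} acts by $\mathcal{G}f(x)=a(x,u)f'(x)+b(x,u)(f(rx)-f(x))=a(x,u)-(1-r)\,x\,b(x,u)$. Writing $\underline b=\inf\{b(x,u):x\ge x_0\}>0$, for $x\ge x_0$ this is at most $\bar a-(1-r)\underline b\,x$, which tends to $-\infty$ as $x\to\infty$ (this is where $r<1$ is used; for $r=1$ the process diverges and has no stationary distribution). Hence there is $M\ge x_0$ with $\mathcal{G}f\le-1$ on $[M,\infty)$, while trivially $\mathcal{G}f\le\bar a$ on $[0,M]$, so $\mathcal{G}f\le-1+(1+\bar a)\ind{x\in[0,M]}$, a Foster--Lyapunov drift condition. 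A standard Dynkin/optional-stopping argument then shows that $V$ reaches the compact set $[0,M]$ from any starting point in finite expected time, and returns to it infinitely often.

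Now fix a level $\ell>M$. Since $V$ moves upwards only through the continuous drift — all jumps being downwards, from $x$ to $rx$ — every instant at which $V$ attains $\ell$ coming from below is a continuous crossing with $V$ exactly equal to $\ell$; by the Markov property these are genuine regeneration times. By the drift estimate $V$ returns to $[0,M]\subset[0,\ell)$ infinitely often, and from any $x\in[0,\ell)$ the process reaches $\ell$ before its next jump with probability bounded below by a positive constant, since the drift pushes upwards at speed $\ge\underline a=\inf\{a(x,u):x\ge0\}>0$ while the jump rate is bounded on the compact $[0,\ell]$ by local boundedness of $b(\cdot,u)$; so $\ell$ is crossed from below infinitely often, with finite expected return time (the return being completed in a geometric number of attempts, each of mean duration controlled by the Foster--Lyapunov estimate). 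Thus $\ell$ is a positive recurrent state: the excursions of $V$ between consecutive upward crossings of $\ell$ are i.i.d., and since the first-jump time from $\ell$ has an absolutely continuous law the cycle length is non-lattice. Classical regenerative / positive-recurrent-state theory (Asmussen~\cite{Asmussen}, Nummelin~\cite{Nummelin:02}) then yields a unique stationary distribution $\pi^V$, the normalised cycle-occupation measure, and Harris ergodicity of $(V(t),{t\ge0})$. (If $r=0$ the point $0$ is itself an atom, visited at every jump, and the argument simplifies accordingly.)

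The main obstacle is the step turning the Foster--Lyapunov estimate into a bona fide positive recurrent state: because the motion between jumps is purely deterministic, one must rely on the randomness of the inter-jump times — both to guarantee that the fixed level $\ell$ is actually crossed upwards infinitely often with an \emph{integrable} return time (the return-time bound requires combining the downward drift estimate with the possibly small, but uniformly positive, probability of drifting up to $\ell$ without jumping), and to ensure the cycle length is non-lattice. Once this quantitative recurrence at $\ell$ is in hand, the remaining steps are routine applications of regenerative process theory.
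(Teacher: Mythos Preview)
Your argument is correct and takes a genuinely different route from the paper's. The paper shows that $x_0$ is positive recurrent by a \emph{coupling} argument: it introduces the auxiliary AIMD process $(\widetilde{V}(t))$ solving \eqref{aimdd} with constant jump rate $\eta=\inf\{b(x,u):x\ge x_0\}$, couples it above $(V(t))$ while both stay above $x_0$, and invokes the known Harris ergodicity of $(\widetilde{V}(t))$ (citing \cite{Dumas:06,Chafai}) to conclude that the descent time below $x_0$ is integrable; the ascent from $[0,x_0]$ to $x_0$ is handled exactly as you do, using the lower bound on the drift and the local bound on the jump rate. You replace the coupling-plus-citation step by a direct Foster--Lyapunov computation with $f(x)=1+x$, which is more self-contained and makes explicit where $r<1$ enters (the paper leaves this implicit). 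Your choice of the regeneration level $\ell>M$ rather than $x_0$ is inessential; the structural observation that any fixed level is an atom because all jumps are downward is the same in both proofs. What the paper's approach buys is brevity, by outsourcing the downward-drift estimate to the literature on the constant-rate AIMD model; what yours buys is independence from those references and a transparent quantitative bound. Your treatment of the non-lattice condition via absolute continuity of the first-jump time is also more explicit than the paper's ``classical results conclude the proof.''
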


\begin{proof}
The existence and uniqueness result of solution and the Markov property are classical.
The hitting time of $x_0$ by $(V(t), {t\ge0})$ is defined by
\[
T_{x_0}=\inf\{t>0: V(t)=x_0, \exists s<t, V(s) \neq x_0\}\,.
\]
Let $\eta:=\inf\{b(x,u):x\ge x_0\}$, and
$(\widetilde{V}(t), {t\ge0})$ be the solution of the SDE\begin{equation}\label{aimdd}
d\widetilde{V}(t) =
 a(\widetilde{V}(t-), u)\, dt 
- (1-r) \widetilde{V}(t-) 
\int \ind{0 < z < \eta}
\,\mathcal{N}(dz,dt)
\end{equation}
(existence and uniqueness are classical).
If $\widetilde{V}(0) \geq V(0) \geq x_0$ then it is a simple matter to construct a coupling 
of $(\widetilde{V}(t), {t\ge0})$ and $(V(t), {t\ge0})$
such that the relation  $\widetilde{V}(t) \geq V(t)$ holds as long as $V(t)\geq x_0$.  
The Markov  process $(\widetilde{V}(t))$ is Harris ergodic; see Dumas \emph{et al.}~\cite{Dumas:06} for the existence
of a stationary distribution satisfying  the coupling  property, and
Chafa\"i \emph{et al.}~\cite{Chafai} for more on long-time convergence.  
The  hitting  time  below  $x_0$  is  therefore  integrable  for
$(\widetilde{V}(t))$, and hence for $(V(t))$.

Now if $V(0) < x_0$ then, since there are no upward jumps and
$x\mapsto a(x,u)$ is lower bounded away from $0$ and $x\mapsto b(x,u)$ is
bounded on the compact set $[0,x_0]$, it is a simple matter to show that 
the hitting time of $x_0$ by $(V(t))$ is integrable. One therefore
concludes that $\E_{x_0}(T_{x_0})<\infty$, and $x_0$ is positive recurrent,
and classical results conclude the proof.
\end{proof}

Note that $(W(t), {t\ge0}))$ (as defined in this section) is an homogeneous Markov  process, and 
can be seen as a solution  of the SDE~\eqref{perm} killed at some random instants
and  regenerated  after an  exponentially  distributed  duration  of time  with  parameter
$\lambda(u)$.
Because of this regenerative structure, the stationary distribution of the SDE~\eqref{perm} will not come into play as it did in Graham and 
Robert~\cite{Graham:09}.

\begin{theorem}\label{th-invmeaW}
Let the functions $a$ and $b$ satisfy the assumptions of
Theorem~\ref{th-ergoV}, and moreover $\lambda(u)>0$ and $\int \mu(x,u)\pi^V(dx) >0$.
Let  $(V(t), {t\ge0}))$ be the solution of~\eqref{perm} starting at $V(0)$ 
with law $\alpha(dw)$, and $\gamma$ be the positive measure on $\R_+\cup\{-1\}$ defined by 
\begin{equation}
\label{invmeas}
\int f(x)\,\gamma(dx)= f(-1) + \lambda(u) \int_0^{+\infty} 
\E\left(f(V(t))\exp\left(-\int_0^{t} \mu(V(s),u)\,ds\right)\right)\,dt
\end{equation}
for all non-negative Borel functions $f$ on $\R_+\cup\{-1\}$. 
Then $\gamma$ is the unique invariant measure of the Markov process $(W(t), {t\ge0}))$
defined by~\eqref{sdeg}: for all $t\ge0$,
\[
\int \E_x(f(W(t)))\gamma(dx)=\int f(x)\gamma(dx)
\]
for all non-negative Borel functions $f$ on $\R_+\cup\{-1\}$. 
\end{theorem}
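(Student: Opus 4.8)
The plan is to exploit the regenerative structure of $(W(t),{t\ge0})$ noted just before the statement: the process sits at $-1$ for an $\mathrm{Exp}(\lambda(u))$ time, then is launched into $\R_+$ with law $\alpha$, evolves as the killed diffusion-with-jumps $(V(t))$ of \eqref{perm} until an independent killing at rate $\mu(V(t),u)$, at which instant it returns to $-1$, and so on. The renewal cycles are i.i.d.\ once one conditions on the starting point $-1$. I would first verify that the cycle length has finite mean: the OFF-part contributes $1/\lambda(u)<\infty$, and the ON-part is the killing time $\zeta=\inf\{t: \text{clock at rate }\mu(V(s),u)\text{ rings}\}$, which has $\E(\zeta)=\int_0^\infty \E\bigl(\exp(-\int_0^t\mu(V(s),u)\,ds)\bigr)\,dt$; this is finite precisely because by Harris ergodicity of $(V(t))$ (Theorem~\ref{th-ergoV}) the time average of $\mu(V(s),u)$ converges to $\int\mu(x,u)\,\pi^V(dx)>0$, so the integrand decays exponentially. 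Thus the process is positive recurrent in the sense of regenerative processes, with a unique (up to scaling) invariant measure given by the expected occupation measure over one cycle.

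Next I would identify that occupation measure. Starting from $-1$, the expected time spent testing a Borel set is: the mass $1/\lambda(u)$ at $\{-1\}$ (the OFF sojourn) plus the expected occupation of Borel sets in $\R_+$ during one ON excursion, namely $\E\bigl(\int_0^\zeta f(V(t))\,dt\bigr) = \int_0^\infty \E\bigl(f(V(t))\exp(-\int_0^t\mu(V(s),u)\,ds)\bigr)\,dt$ after integrating out the independent killing clock and using that $V(0)$ has law $\alpha$. Multiplying through by $\lambda(u)$ (a harmless renormalization) gives exactly the measure $\gamma$ of \eqref{invmeas}; in particular $\gamma$ is $\sigma$-finite, indeed the $\R_+$-part has finite mass $\lambda(u)\,\E(\zeta)<\infty$ and $\gamma(\{-1\})=1$. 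So $\gamma$ is, up to the normalizing constant $1+\lambda(u)\E(\zeta)$, the stationary law of the positive recurrent regenerative process, which establishes invariance: $\int \E_x(f(W(t)))\,\gamma(dx)=\int f(x)\,\gamma(dx)$ for all $t\ge0$ and all nonnegative Borel $f$.

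For uniqueness I would argue that $(W(t))$ is an irreducible positive recurrent Markov process on $\R_+\cup\{-1\}$ — more precisely, $\{-1\}$ is an atom that is hit in finite mean time from every starting point (from $\R_+$, because the killing rate is bounded below on $[0,x_0]$ after the process enters that compact set, which happens in integrable time by the no-upward-jumps argument of Theorem~\ref{th-ergoV}; from $-1$ trivially). An irreducible positive recurrent process admits a unique invariant measure up to constant multiples (Nummelin~\cite{Nummelin:02}, Asmussen~\cite{Asmussen}), and normalizing so that the $\{-1\}$-atom has mass $1$ singles out $\gamma$. I expect the main technical obstacle to be the rigorous justification of $\E(\zeta)<\infty$ — i.e.\ turning ``Harris ergodicity of $(V(t))$ plus $\int\mu\,d\pi^V>0$'' into genuine exponential decay of $t\mapsto\E(\exp(-\int_0^t\mu(V(s),u)\,ds))$; the clean way is to compare with the hitting times of the lower-bound process $(\widetilde V(t))$ from \eqref{aimdd} used in Theorem~\ref{th-ergoV}, bounding $\int_0^t\mu(V(s),u)\,ds$ below by $\eta'\cdot|\{s\le t: V(s)\ge x_1\}|$ for suitable $x_1,\eta'$ and invoking that the fraction of time $(V(t))$ spends above $x_1$ stays bounded away from $0$ with overwhelming probability by ergodicity, giving the required integrable tail.
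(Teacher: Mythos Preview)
Your regenerative-cycle approach is exactly the one the paper uses, but you are aiming for a stronger statement than what is needed (and than what the assumptions guarantee). You set out to prove that the cycle length has \emph{finite mean}, i.e.\ $\E(\zeta)<\infty$, and you flag this as ``the main technical obstacle''. In fact the theorem only asserts that $\gamma$ is an invariant \emph{measure}, not a probability, and the classical result you invoke (Asmussen~\cite{Asmussen}, Robert~\cite{Robert:08}) requires merely $\P(\tau_{-1}=\infty)=0$, not $\E(\tau_{-1})<\infty$, for the occupation measure over one cycle to be the unique invariant measure. The paper proves only this weaker a.s.\ finiteness, and it follows in one line from Harris ergodicity: the ergodic theorem gives $\frac1t\int_0^t\mu(V(s),u)\,ds\to\int\mu(x,u)\,\pi^V(dx)>0$ a.s., hence $\int_0^t\mu(V(s),u)\,ds\to\infty$ a.s., hence the killing clock a.s.\ rings. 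No exponential-decay estimate is needed.

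Your attempt to establish $\E(\zeta)<\infty$ is not only unnecessary but may fail under the stated hypotheses: the paper explicitly leaves open whether $\gamma$ has finite total mass (see the Remark following the theorem, which gives $\inf_{x\ge0}\mu(x,u)>0$ as a \emph{sufficient} condition). Your sketch at the end also implicitly assumes a lower bound $\mu(x,u)\ge\eta'$ for $x\ge x_1$, which is not part of the hypotheses---only $\int\mu\,d\pi^V>0$ is assumed. Once you drop the finite-mean goal and replace it by $\P(\zeta=\infty)=0$, your identification of the occupation measure and the multiplication by $\lambda(u)$ to obtain $\gamma$ match the paper exactly, and uniqueness follows from recurrence alone (no positive recurrence needed).
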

\begin{proof}
Let the stopping time
\[
\tau_{-1}=\inf\{t>0: W(t)=-1, \exists s<t, W(s) \neq -1\}
\]
be the cycle time of  $(W(t), {t\ge0})$ between
two visits to $-1$.
Classically, if $\P(\tau_{-1}=\infty)=0$ then 
it is well known that 
the measure $\hat{\gamma}$ defined by 
\begin{equation}\label{aux}
\int f(x)\,\hat{\gamma}(dx) = \E_{-1}\left(\int_0^{\tau_{-1}} f(W(s))\,ds\right)
\end{equation}
for all non-negative Borel functions $f$ on $\R_+\cup\{-1\}$ 
is the unique invariant measure for $(W(t), {t\ge0})$,
see \emph{e.g.} Asmussen~\cite[Proposition~3.2]{Asmussen}, 
Robert~\cite[Proposition~8.12]{Robert:08}. 

In the following, $W(0)=-1$.
Then $(W(t), {t\ge0})$ remains in
$-1$ for an exponentially distributed duration $E_{\lambda(u)}$ with parameter $\lambda(u)$,
after which it jumps to an $\alpha$-distributed state in $\R_+$. Then, its excursion in $\R_+$
has same distribution as $(V(t), {t\ge0})$ until it returns to $-1$.  
Thus, $\hat{\gamma}(\{-1\})=1/\lambda(u)$, and $\tau_{-1} = E_{\lambda(u)} +\tau$ 
where $\tau$ is the first instant of an inhomogeneous Poisson process  on $\R_+$
with rate function $(\mu(V(t),u), {t\ge0})$, and 
\[
\P(\tau >t\mid V(s), 0\leq s\leq t)=\exp\left(-\int_0^t \mu(V(s),u)\,ds\right).
\] 
Since $(V(t), {t\ge0})$ is Harris ergodic and $\int \mu(x,u)\pi^V(dx) >0$, 
the corresponding ergodic theorem yields
\[
\lim_{t\to +\infty} \int_0^t \mu(V(s),u)\,ds = +\infty\,,
\;\;\text{a.s.},
\]
so that $\P(\tau = \infty)=0$, and thus $\P(\tau_{-1}=\infty)=0$
implying that $\hat{\gamma}$ is indeed the unique invariant measure.  
From the above,
\begin{align*}
\E\left(\int_0^{\tau} f(V(t))\,dt\right)
&=\E\left(\int_0^{+\infty}f(V(t))\ind{\tau>t}\,dt\right)\\
&=\E\left(\int_0^{+\infty} f(V(t))\P(\tau>t \,|\, V(s), 0\leq s\leq t)\,dt\right)\\
&= \int_0^{+\infty} \E\left(f(V(t))\exp\left(-\int_0^{t} \mu(V(s),u)\,ds\right)\right)\,dt
\end{align*}
so that $\gamma$ defined by \eqref{invmeas} is equal to $\lambda(u)\hat{\gamma}$. 
\end{proof}

If the measure $\gamma$ has a finite mass then there is a unique stationary
distribution given by $\pi:=\gamma/(1+\gamma(\R_+))$, else there is none.

\begin{remark}
The measure $\gamma$ is finite on all Borel functions $f$ such that, for some $C>0$, $|f(x)|\leq
C\mu(x,u)$ for any $x\geq 0$. Indeed, since $t\to V(t)$  is continuous almost everywhere
for the Lebesgue measure on $\R_+$,  then
\begin{multline*}
\int_0^{+\infty}|f(V(t))|\exp\left(-\int_0^{t} \mu(V(s),u)\,ds\right)\,dt \\ \leq 
C\int_0^{+\infty}\mu(V(t),u)\exp\left(-\int_0^{t} \mu(V(s),u)\,ds\right)\,dt \leq C.
\end{multline*}
In particular, if $\inf_{x\ge0}\mu(x,u)>0$ then $\gamma$ has a finite mass. 
\end{remark}

\subsection{Fixed-point equations}

In this section, the study of the case of $K$ classes of users is resumed, and
regular notation is again used.
It is assumed
that the functions $a_k(\cdot, u)$ and $b_k(\cdot,u)$ satisfy the assumptions of
Theorem~\ref{th-ergoV}, and
that the process $(W_k(t), {1\le k \le K})$ has a stationary distribution $\pi$ on $\R^K$, 
and the corresponding load vector is denoted by $u=(u_j, {1\le j \le J})$. 

Because of the independence of the coordinates,
$\pi$ can be written as $\pi=\bigotimes_{k=1}^K\pi_{k,u}$ where $\pi_{k,u}$ is the stationary
distribution of the solution $(W_k(t), {t\ge0})$  of SDE~\eqref{sdeg} with coefficients 
$\lambda_k(u)$ and $r_k$ and functions $a_k(\cdot,u)$,
$b_k(\cdot,u)$ and $\mu_k(\cdot,u)$. By Theorem~\ref{th-invmeaW}, 
\begin{equation}\label{cont}
\left\{
\begin{aligned}
\int &f(x)\,\pi_{k,u}(dx) 
= \frac{1}{1+\lambda_k(u)Z_k(u)}f(-1)\\
&+ \frac{\lambda_k(u)}{1+\lambda_k(u)Z_k(u)}\int_0^{+\infty} \E\left(f(V_{k,u}(t))\exp\left(-\int_0^{t} \mu_k(V_{k,u}(s),u)\,ds\right)\right)\,dt\,,
\\
&\text{ with }Z_k(u) 
=\int_0^{+\infty} \E\left(\exp\left(-\int_0^{t} \mu_k(V_{k,u}(s),u)\,ds\right)\right)\,dt\,,
\end{aligned}
\right.
\end{equation}
for  all non-negative Borel functions $f$ on $\R\cup\{-1\}$, where
$V_{k,u}$  is the unique  solution
of the SDE~\eqref{perm} (with index $k$ added to the functions) with
initial distribution $\alpha_k(dw)$, and $Z_k$
is the appropriate normalization constant. 
Hence, by rewriting \eqref{fpe} one gets the following theorem.
 
\begin{theorem} 
Let the functions $a_k$ and $b_k$ and $\lambda_k$ and $\mu_k$ satisfy the assumptions of
Theorem~\ref{th-invmeaW} for any $1\leq k\leq K$ and $u\in\R_+^J$.
Then, any stationary distribution of the nonlinear
SDE~\eqref{nlsde} can be written as $\pi = \bigotimes_{k=1}^K\pi_{k,u}$, 
where $\pi_{k,u}$ is defined by Equation~\eqref{cont} and  $u=(u_j, {1\le j \le J})\in\R_+^J$ is a solution of the fixed point equation
\begin{equation}\label{fpem}
u_{j}= \sum_{k=1}^K A_{jk} p_k \frac{\lambda_k(u)}{1+\lambda_k(u)Z_k(u)}
\int_0^{+\infty}
\E\left(V_{k,u}(t)\exp\left(-\int_0^{t} \mu_k(V_{k,u}(s),u)\,ds\right)\right)\,dt
\end{equation}
where $V_{k,u}$  is the  solution
of the SDE~\eqref{perm} associated to $r_k$ and the functions $a_k(\cdot, u)$, $b_k(\cdot,u)$, $\mu_k(\cdot,u)$ with
initial distribution $\alpha_k(dw)$ and $Z_k(u)$ is the corresponding
normalizing constant given in Equation~\eqref{cont}.
\end{theorem}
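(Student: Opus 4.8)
The plan is to combine the decoupling of the coordinates of~\eqref{nlsde} in stationary regime with the explicit description, obtained in Theorem~\ref{th-invmeaW}, of the unique invariant measure of the generic one-dimensional process.

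First I would start from a stationary distribution $\pi$ of~\eqref{nlsde} and take it as the initial law; then, as already observed right after~\eqref{fpe}, the load vector $(u_W(t), {t\ge0})$ is constant, say equal to $u=(u_j, {1\le j\le J})\in\R_+^J$. Freezing the argument of the coefficients of~\eqref{nlsde} at this $u$ removes the dependence on the law of $W(t)$: the $K$ coordinates then evolve independently, and the $k$-th coordinate $(W_k(t), {t\ge0})$ is precisely the time-homogeneous Markov process solving the generic SDE~\eqref{sdeg} with parameters $\lambda_k(u)$, $r_k$ and functions $a_k(\cdot,u)$, $b_k(\cdot,u)$, $\mu_k(\cdot,u)$. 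Hence $\pi=\bigotimes_{k=1}^K\pi_k$ with $\pi_k$ its $k$-th marginal, and each $\pi_k$ is a stationary distribution for this generic process.

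Next I would apply Theorem~\ref{th-invmeaW}: the hypotheses of the present theorem are exactly those needed there, for each $1\le k\le K$ and the above $u$, so every invariant measure of $(W_k(t), {t\ge0})$ is proportional to the measure $\gamma_k$ defined by~\eqref{invmeas} (with index $k$). Existence of the stationary probability $\pi_k$ forces $\gamma_k$ to be finite; reading off from~\eqref{invmeas} that $\gamma_k(\{-1\})=1$ (take $f=\ind{-1}$, which vanishes along the path $V_{k,u}$) and $\gamma_k(\R_+)=\lambda_k(u)Z_k(u)$ (take $f=\mathbbm{1}_{\R_+}$, equal to $1$ along $V_{k,u}$), normalisation gives $\pi_k=\pi_{k,u}$ with $\pi_{k,u}$ as in~\eqref{cont}.

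Finally I would check the self-consistency of $u$. Taking $f(x)=x^+$ in~\eqref{cont}, and using $f(-1)=(-1)^+=0$, yields
\[
\E(W_k(t)^+)=\int x^+\,\pi_{k,u}(dx)=\frac{\lambda_k(u)}{1+\lambda_k(u)Z_k(u)}\int_0^{+\infty}\E\left(V_{k,u}(t)\exp\left(-\int_0^{t}\mu_k(V_{k,u}(s),u)\,ds\right)\right)dt\,;
\]
substituting this into the definition~\eqref{fpe} of the (constant) load, $u_j=\sum_{k=1}^K A_{jk}p_k\E(W_k(t)^+)$, exhibits $u$ as a solution of~\eqref{fpem}, which completes the argument. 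The only genuinely delicate point is the decoupling step: one has to be sure that stationarity of the law of $W(t)$ indeed freezes $u_W(t)$ and turns each coordinate into a time-homogeneous Markov process falling under Theorem~\ref{th-invmeaW}; once that is in place, the rest is bookkeeping with the explicit formula~\eqref{cont}.
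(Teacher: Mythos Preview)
Your argument is correct and follows exactly the route the paper takes: freeze $u_W(t)$ at the stationary value $u$, use the independence of coordinates to factorize $\pi$, invoke Theorem~\ref{th-invmeaW} to identify each marginal as $\pi_{k,u}$ via~\eqref{cont}, and then substitute $f(x)=x^+$ into~\eqref{cont} and plug into~\eqref{fpe} to obtain~\eqref{fpem}. You have in fact spelled out more detail than the paper, which simply states that~\eqref{fpem} follows ``by rewriting~\eqref{fpe}''.
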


The characterization of the invariant distributions involves the transitory behavior of the solution
$(V_{k,u}(t), {t\ge0})$, which leads to a more complex and less explicit expression
for the fixed-point equation in comparison to the case of permanent
connections investigated in Graham and  Robert~\cite{Graham:09}.

One concludes with two examples for the service rates $(\mu_k)$ under the assumptions of the
above theorem.
\subsubsection*{Constant service rate}
It is assumed that $\mu_k(x,u)=\mu_k(u)$ for $x\geq 0$ and
$u\in\R_+^J$. Equation~\eqref{fpem} is in this case
\[
u_{j}= \sum_{k=1}^K A_{jk} p_k \frac{\lambda_k(u)}{1+\lambda_k(u)/\mu_k(u)}\int_{0}^{+\infty}
\E\left[V_{k,u}(t)\right]e^{-\mu_k(u)t}\,dt, \quad 1\leq j\leq J.
\]
The integral in the right hand side of the above equation is related to the resolvent of
the Markov process $(V_{k,u}(t))$.
\subsubsection*{Linear service rate}
It is assumed that $\mu_k(x,u)=x\mu_k(u)$ for $x\geq 0$ and
$u\in\R_+^J$. Equation~\eqref{fpem} becomes
\[
u_{j}= \sum_{k=1}^K A_{jk} p_k \frac{\lambda_k(u)/\mu_k(u)}{1+\lambda_k(u)Z_k(u)},
\]
with 
\[
Z_k(u)=\int_{0}^{+\infty} \E\left(\exp\left(-\mu_k(u)  \int_0^{t}V_{k,u}(s)\,ds\right)\right)\,dt.
\]
\subsection*{Future Work}
As it may be seen from Relation~\eqref{fpem}, the explicit representation of the invariant
distribution involves the distribution of  the solution of the SDE~\eqref{perm} associated
to  a  permanent  connection.  If  the  equilibrium behavior  of  this  process  is  fully
understood,  its transient  characteristics are  not  well known.   See Chafa\"i  \emph{et
  al.}~\cite{Chafai} for the rate of convergence to equilibrium. The case of constant rate
shows that one should have an expression of the resolvent of this process.  This
is, in our view, an interesting challenging problem of this domain.

\providecommand{\bysame}{\leavevmode\hbox to3em{\hrulefill}\thinspace}
\providecommand{\MR}{\relax\ifhmode\unskip\space\fi MR }
\providecommand{\MRhref}[2]{%
  \href{http://www.ams.org/mathscinet-getitem?mr=#1}{#2}
}
\providecommand{\href}[2]{#2}

\end{document}